\newtheorem{theorem}{Theorem}
\newtheorem{corollary}{Corollary}
\newtheorem{definition}{Definition}
\begin{document}

\title{\bf Control flow in active inference systems}

\author{{Chris Fields$^{a}$\footnote{Corresponding author at: Allen Discovery Center at Tufts University, Medford, MA 02155 USA; {\it E-mail address}: fieldsres@gmail.com}, Filippo Fabrocini$^{b,c}$, Karl Friston$^{d,e}$, James F. Glazebrook$^{f,g}$,}\\ 
{Hananel Hazan$^{a}$, Michael Levin$^{a,h}$ and Antonino Marcian\`{o}$^{i,j,k}$}\\ \\
{\it$^a$ Allen Discovery Center at Tufts University, Medford, MA 02155 USA}\\
{\it$^b$ College of Design and Innovation, Tongji University, 281 Fuxin Rd,}\\
{\it 200092 Shanghai, CHINA}\\
{\it$^c$ Institute for Computing Applications ``Mario Picone'',}\\
{\it Italy National Research Council, Via dei Taurini, 19, 00185 Rome, ITALY}\\
{\it$^d$ Wellcome Centre for Human Neuroimaging, University College London,} \\
{\it London, WC1N 3AR, UK} \\
{\it$^e$ VERSES Research Lab, Los Angeles, CA, 90016 USA} \\
{\it$^f$ Department of Mathematics and Computer Science,} \\
{\it Eastern Illinois University, Charleston, IL 61920 USA} \\
{\it$^g$ Adjunct Faculty, Department of Mathematics,}\\
{\it University of Illinois at Urbana-Champaign, Urbana, IL 61801 USA}\\
{\it$^h$ Wyss Institute for Biologically Inspired Engineering at Harvard University,}\\
{\it Boston, MA 02115, USA}\\
{\it$^i$ Center for Field Theory and Particle Physics \& Department of Physics} \\
{\it Fudan University, Shanghai, CHINA} \\
{\it$^j$ Laboratori Nazionali di Frascati INFN, Frascati (Rome), ITALY}\\
\it$^k$ INFN sezione Roma ``Tor Vergata", I-00133 Rome, ITALY}

\maketitle

{\bf Abstract} \\
Living systems face both environmental complexity and limited access to free-energy resources.  Survival under these conditions requires a control system that can activate, or deploy, available perception and action resources in a context specific way.  We show here that when systems are described as executing active inference driven by the free-energy principle (and hence can be considered Bayesian prediction-error minimizers), their control flow systems can always be represented as tensor networks (TNs).  We show how TNs as control systems can be implmented within the general framework of quantum topological neural networks, and discuss the implications of these results for modeling   biological systems at multiple scales. \\

{\bf Keywords} \\
Bayesian mechanics; Dynamic attractor; Free-energy principle; Quantum reference frame; Scale-free model; Topological quantum field theory \\

\tableofcontents

\section{Introduction}

Living things offer remarkable examples of complex, multi-level control policies that guide adaptive function at several scales. At the same time, they are made of components which are usually thought of as physical objects obeying simple rules; how can these two perspectives be unified in a rigorous manner?  The framework of {\em active inference} answers this question be providing a completely general, scale-free formal framework for describing interactions between physical systems in cognitive terms.  It is based on the Free Energy Principle (FEP), first introduced in neuroscience \cite{friston:05, friston:06, friston:07, friston:10, friston:13} before being extended to living systems in general \cite{friston:17, ramstead:18, ramstead:19, kuchling:20} and then to all self-organizing systems \cite{friston:19, ramstead:22, ffgl:22, friston:22}.  The FEP states that any system that interacts with its environment weakly enough to maintain its identifiability over time 1) has a Markov blanket (MB) that separates its internal states from the states of its environment \cite{pearl:88, clark:17, kirchhoff:18, parr:20, sakth:22} and 2) behaves over time in a way that asymptotically minimizes a variational free energy (VFE) measured at its MB.  Equivalently, the FEP states that any system with a non-equilibrium steady-state (NESS) solution to its density dynamics (and hence an MB) will act so as to maintain its state in the vicinity of its NESS.  Any system compliant with the FEP can be described as engaging, at all times, in active inference: a cyclic process in which the system observes its environment, updates its probabilistic ``Bayesian beliefs'' (i.e., posterior or conditional probability densities) over future behaviors, and acts on its environment so as to test its predictions and gain additional information.  The internal dynamics of such a system can be described as inverting a generative model (GM) of its environment that furnishes predictions of the consequences of its actions on its MB.

As a fully-general principle, the FEP applies to all physical systems, not just to behaviorally interesting, plausibly cognitive systems, such as organisms or autonomous robots \cite{friston:19}.  Intuitively, behavior is interesting -- to external observers and, we can assume, to the behaving system itself -- when it is complex, situation-appropriate, and robust in the face of changing environmental conditions.  Friston et al. \cite{friston:22} characterize interesting systems as ``strange particles'', whose internal (i.e., cognitive) states are influenced by their actions only via perceived environmental responses; such systems have to ``ask questions'' of their environments in order to get answers \cite{wheeler:89}.  Such systems, even bacteria and other basal organisms \cite{levin:19, levin:21c, fgl:21, levin:22}, have multiple ways of observing and acting upon their environments and deploy these resources in context-sensitive ways.  In operations-research language, they exhibit situational awareness, i.e., awareness of the context of actions \cite{endsley:12}, and deploy attention systems to manage the informational, thermodynamic, and metabolic costs of maintaining such awareness \cite{ffgl:22, fgl:21}.  Situational awareness is dependent on both short- and long-term memory, or more technically, on the period of time over which precise [Bayesian] beliefs exist, sometimes referred to as the temporal depth or horizon of the GM \cite{levin:19, levin:21c}.  Upper limits can, therefore, be placed on behavioral complexity by examining the capacity and control of memory systems from the cellular scale \cite{fl:17} upwards.  Living systems from microbial mats to human societies employ stigmergic memories \cite{fgl:21} and hence have ``extended minds'' \cite{clark:98} in the sense of the literature on embodied, embedded, enactive, extended, and affective (4EA) cognition \cite{anderson:03, froese:09}.  Such memories must be both readable and writable; hence any system using them must have dedicated, memory-specific perception--action capabilities.

Any system with multiple perception--action (or stimulus--response) capabilities requires a control system that enables context-guided perception and action and precluding the continuous, simultaneous deployment of all available perception--action capabilities.  Such self organization entails the selection of a particular course of action -- i.e., policy -- from all plausible policies entertained by the system's GM.  In the active inference framework, the system's internal states -- hence its GM -- can be read as encoding posterior probability densities (i.e., Bayesian beliefs) over the causes of its sensory states, including, crucially, its own actions. This leads to the notion of planning and control as inference \cite{attias:03, botvinick:12, lanillos:21}, with the ensuing selection of an action given by the most likely policy.  In bacteria such as {\em E. coli}, for example, mutual inhibition between gene regulatory networks (GRNs) for different metabolic operons permit the expression of specific carbon-source (e.g., sugar) metabolism pathways only when the target carbon source is detected in the environment \cite{chubukov:14}.  The control of foraging behavior via chemotaxis employs a similar, in this case bistable, mechanism \cite{micali:16}.  Such mechanisms are active in multicellular morphogenesis, for example, in the head-versus-tail morphology decision in planaria \cite{pezzulo:21}. In the human brain, mutual inhibition between competing visual processing streams is evident in binocular rivalry (switching between distinct scenes presented to left and right eyes) or in the changing interpretations of ambiguous figures such as the Necker cube \cite{blake:02, stertzer:09}; similar competitive effects are observed in other sensory pathways \cite{schwartz:12}.  It also characterizes the competitive interaction between the dorsal and ventral attention systems, which implement top-down and bottom-up targeting of sensory resources, respectively \cite{vossel:14}.  It is invoked at a still larger scale in global workspace models of conscious processing, in which incoming information streams must compete, with each inhibiting the others, for ``access to consciousness'' \cite{baars:03, baars:13}.  Mutual inhibition creates an energetic barrier that the control system that implements switching must expend free-energy resources to overcome; the controller must not only turn ``on'' the preferred system, but also turn ``off'' the inhibition.  The required free energy expenditure in turn induces hysteresis and hence the non-linear, winner-takes-all ``switch'' behavior in the time regime.  Such barriers and their temporal consequences persist in more complex control systems whenever two perception--action capabilities are either functionally incompatible or too expensive to deploy simultaneously.

Switching between perception--action capabilities can be regarded, from a theoretical, FEP perspective, as selecting a plausible policy, or plan, supported by the GM.  Technically, the probability distribution over policies or plans can be computed from a free energy functional expected under the posterior predictive density over possible outcomes, as described in \S \ref{attractor} below.  The control system that implements the switching process can be considered to employ the GM to predict, or assign probability distribution to, each perception-action capability (i.e., policy) as a function of context \cite{kfl:22, parr:19}.  We can consider the GM to generate probabilistic ``beliefs'' about the consequences of actions, where here a ``belief'' is just a mathematically-described structure, e.g., a classical conditional probability density or a quantum state with an assigned amplitude.  ``Planning'' or ``control'' can, therefore, always be cast as inference -- again in the basal sense of computation -- implemented by variational message passing or ``belief propagation'' on a (normal style) factor graph: a graph with nodes corresponding to the factors of a probability distribution and undirected edges corresponding to message-passing channels.  Factor graphs can be combined with message passing schemes, with the messages generally corresponding to sufficient statistics of the factors in question, to provide an efficient computation of functions such as marginal densities \cite{winn:05, dauwels:07}.  Hence one can formalize control -- under the FEP -- in terms of control as inference, which implies that there is a description of control in terms of message passing on a factor graph.  When the GM is over discrete states, this implies a description of control in terms of tensor operators. 

Nearly all simulations of planning -- under discrete state space GMs -- use the factor-graph formalism. Crucially, the structure of the factor graph embodies the structure of the GM and, effectively, the way that any system represents the (apparent causes of) data on its MB; i.e., the way it ``carves nature at its joints,'' into states, objects and categorical features.  Under the (classical) FEP, the factors that constitute the nodes of the factor graph correspond to the state-space factorization in a mean field approximation, as used by physicists, or by statisticians to implement variational Bayesian (a.k.a., approximate Bayesian) inference \cite{parr:20a}.  See \cite{dacosta:20} for technical details, \cite{friston:17a} for an application to the brain, and Supplementary Information, Table 1 for a list of selected applications.

We show in this paper that control flow in such systems can always be formally described as a tensor network, a factorization of some overall tensor (i.e., high-dimensional matrix) operator into multiple component tensor operators that are pairwise contracted on shared degrees of freedom  \cite{orus:19}.  In particular, we show that the factorization conditions that allow the construction of a TN are exactly the same as those that allow the identification of distinct, mutually conditionally independent (in quantum terms, decoherent), sets of data on the MB, and hence allow the identification of distinct ``objects'' or ``features'' in the environment.  This equivalence allows the topological structures of TNs -- many of which have been well-characterized in applications of the TN formalism to other domains \cite{orus:19} -- to be employed as a classification of control structures in active inference systems; including cells, organisms, and multi-organism communities.  It allows, in particular, a principled approach to the question of whether, and to what extent, a cognitive system can impose a decompositional or mereological (i.e., part-whole) structure on its environment.  Such structures naturally invoke a notion of locality, and hence of geometry.  The geometry of spacetime itself has been described as a particular TN -- a multiscale entanglement renormalization ansatz (MERA) \cite{bao:17, hu:17, chandra:21} -- suggesting a deep link between control flow in systems capable of observing spacetime (i.e., capable of implementing internal representations of spacetime) and the deep structure of spacetime as a physical construct.

We begin in \S \ref{formalism} by analyzing the control-flow problem in three different representations of active inference.  First, we employ the classical, statistical formulation of the FEP \cite{friston:19, ramstead:22} in \S \ref{attractor} to describe control flow as implementing discrete, probabilistic transitions between dynamical attractors on a manifold of computational states.  We then reformulate the physical interaction in quantum information-theoretic terms in \S \ref{qrf}; in this formulation \cite{ffgl:22}, components of the GM can be considered to be distinct quantum reference frames (QRFs) \cite{aharonov:84, bartlett:07} and represented by hierarchical networks of Barwise-Seligman classifiers \cite{barwise:97} as developed in \cite{fg:19a, fg:20a, fg:21, fgm:21}.  Control flow then implements discrete transitions between QRFs.  The third step, in \S \ref{tqft}, employs the mapping between hierarchies of classifiers and topological quantum field theories (TQFTs) developed in \cite{fgm:22a}.  Here, control flow is implemented by a TQFT, with transition amplitudes given by a path integral.  The second and third of these representations provide formal characterizations of intrinsic (or ``quantum'') context effects that are consistent with both the sheaf-theoretic treatment of contextuality in \cite{abramsky:11, abramsky:17} and the Contextuality by Default (CbD) approach of \cite{Dzha2017a, Dzha2018}; see also the discussion in \cite{fg:21} and \cite[\S7.2]{fgm:22a}.  The underlying theme is that contextuality arises due to the non-existence of any globally definable (maximally connected) conditional probability distribution across all possible observations (see e.g., \cite{adlam:21} for a review from a more general physics perspective).  Extending our earlier analysis \cite{fg:21}, we discuss reasons to expect that active inference systems will generically exhibit such context effects. 

We then develop in \S \ref{tn} a fully-general tensor representation of control flow, and prove that this tensor can be factored into a TN if, and only if, the separability (or conditional statistical independence) conditions needed to identify distinct features of or objects in the environment are met.  We show how TN architecture allows classification of control flows, and give two illustrative examples.   We discuss in \S \ref{tqnn} several established relationships between TNs and artificial neural network (ANN) architectures, and how these generalize to topological quantum neural networks \cite{fgm:22a, tqnn:22a}, of which standard deep-learning (DL) architectures are a classical limit \cite{tqnn:22b}.  We turn in \S \ref{applications} to implications of these results for biology, and discuss how TN architectures correlate with the observational capabilities of the system being modeled, particularly as regards abilities to detect spatial locality and mereology.  We consider how to classify known control pathways in terms of TN architecture and how to employ the TN representation of control flow in experimental design.  We conclude by looking forward to how these FEP-based tools can further integrate the physical and life sciences.

\section{Formal description of the control problem} \label{formalism}

\subsection{The attractor picture} \label{attractor}

Let $U$ be a random dynamical system that can be decomposed into subsystems with states $\mu(t)$, $b(t)$, and $\eta(t)$ such that the dependence of the $\mu(t)$ on the $\eta(t)$, and vice-versa, is only via the $b(t)$.  In this case, the $b(t)$ form an MB separating the $\mu(t)$ from the $\eta(t)$.  We will refer to the $\mu(t)$ as ``internal'' states, to the $\eta(t)$ as ``environment'' states, and to the combined $\pi(t) = (b(t), \mu(t))$ as ``particular'' (or ``particle'') states \cite{friston:19}.  The FEP is a variational or least-action principle stating that any system -- that interacts sufficiently weakly with its environment -- can be considered to be enclosed by an MB, i.e. any ``particle'' with states $\pi(t) = (b(t), \mu(t))$, will evolve in a way that tends to minimize a variational free energy (VFE) $F(\pi)$ that is an upper bound on (Bayesian) surprisal.  This free energy is effectively the divergence between the variational density encoded by internal states and the density over external states conditioned on the MB states.  It can be written \cite[Eq. 2.3]{friston:19},

\begin{equation} \label{VFE-def}
\begin{aligned}
F(\pi ) & = \underbrace{{{\mathbb{E}}_{q(\eta )}}[\ln {{q}_{\mu }}(\eta )-\ln p(\eta, b)]}_{\text{Variational free energy}} \\
 & =\underbrace{{{\mathbb{E}}_{q}}[-\ln p(b|\eta )-\ln p(\eta )]}_{\text{Energy constraint (likelihood  }\!\!\And\!\!\text{  prior)}}-\underbrace{{{\mathbb{E}}_{q}}[-\ln {{q}_{\mu }}(\eta )]}_{\text{Entropy}} \\
 & =\underbrace{{{D}_{KL}}[{{q}_{\mu }}(\eta )|p(\eta )]}_{\text{Complexity}}-\underbrace{{{\mathbb{E}}_{q}}[\ln p(b|\eta )]}_{\text{Accuracy}} \\
 & =\underbrace{{{D}_{KL}}[{{q}_{\mu }}(\eta )||p(\eta |b)]}_{\text{Divergence}}\underbrace{-\ln p(b)}_{\text{ Log evidence}}\ge -\ln p(b)
\end{aligned}
\end{equation}
\noindent
The VFE functional $F(\pi)$ is an upper bound on surprisal (a.k.a. self-information) $\mathfrak{I}(\pi) = -\ln p(\pi) > -\ln p(b)$ because the Kullback-Leibler divergence term ($D_{KL}$) is always non-negative. This KL divergence is between the density over external states $\eta$, given the MB state $b$, and a variational density $Q_{\mu} (\eta )$ over external states parameterized by the internal state $\mu$.  If we view the internal state $\mu$ as encoding a posterior over the external state $\eta$, minimizing VFE is, effectively, minimizing a prediction error, under a GM encoded by the NESS density. In this treatment, the NESS density becomes a probabilistic specification of the relationship between external or environmental states and particular (i.e., ``self'') states.  We can interpret the internal and active MB states in terms of active inference, i.e., a Bayesian mechanics \cite{ramstead:22}, in which their expected flow can be read as perception and action, respectively.  Here ``active'' states are a subset of the MB states that are not influenced by environmental states and -- for the kinds of particles considered here -- do not influence internal states.  In other words, active inference is a process of Bayesian belief updating that incorporates active exploration of the environment.  It is one way of interpreting a generalized synchrony between two random dynamical systems that are coupled via an MB.

If the ``particle'' $\pi$ is a biological cell, it is natural to consider the MB $b$ to be implemented by the cell membrane and the ``internal'' states $\mu$ to be the internal macromolecular or biochemical states of the cell; indeed, it is this association that motivated the application of the FEP to cellular life \cite{friston:13}.  In this case, the NESS corresponds to the state, or neighborhood of states, that maintain homeostasis (or more broadly, allostasis \cite{sterling:88, barrett:16, corcoran:20}) and hence maintain the structural and functional integrity of $\pi$ as a living cell.  This activity of self-maintenance has been termed ``self-evidencing'' \cite{hohwy:16}; systems compliant with the FEP can be considered to be continually generating evidence of -- or for -- their continued existence \cite{friston:19}.  

In the terminology of \cite{friston:22} cells are ``strange particles'' -- their signal transduction pathways monitor (components of) the states of their environments, but do not directly monitor their actions on their environments (i.e., their own active states).  The consequences of any action can only, therefore, be deduced from the response of the environment.  In this situation, causation is always uncertain: whether an action by the environment on the cell -- what the cell detects as an environmental state change -- is a causal consequence of an action the cell has taken in the past cannot be determined by the data available to the cell.  Every action, therefore, increases VFE, while every observation (potentially) decreases it.  The (apparent) task of the cell's GM is to minimize the increases, on average, while maximizing the decreases.  

The Bayesian mechanics afforded by the FEP implies a (classical) thermodynamics; indeed, the FEP can be read as a constrained maximum entropy or caliber principle \cite{sakth:22a, sakth:22b} (Sakthivadivel 2022, Sakthivadivel 2022).  This follows from the fact that inference, i.e., self evidencing,  entails belief updating and belief updating incurs a thermodynamic cost via the Jarzynski equality \cite{landauer:61, jarzynski:97, evans:03}.  This  cost provides a lower bound on the  thermodynamic free energy required for metabolic maintenance.  For example, a cell's actions on its environment -- e.g., chemotactic locomotion -- are largely driven by the need to acquire thermodynamic free energy.  The cell's GM cannot, therefore, minimize VFE by minimizing action \cite{dark-room:12}; instead, it must successfully predict which actions will replenish its free-energy supply.  As actions are energetically expensive, this requires trading off short-term costs against long-term goals.  As shown in \cite{kfl:22}, selective pressures operating on different timescales favor the development of metaprocessors that control lower-level actions in a context-dependent way; these are often implemented via a hierarchical GM \cite{pezzulo:15}.  Such meta-level control provides probabilistic models of risk-sensitive actions in context.  

While such systems may be described as regulating free-energy seeking actions, they also regulate information-seeking actions, i.e., curiosity-driven exploration \cite{friston:15a, schmidhuber:91, sun: 11}.  This follows because VFE provides an upper bound on complexity minus accuracy \cite{sengupta:18}. The expected free energy (EFE), conditioned upon any action, can therefore be scored in terms of expected complexity and expected inaccuracy. Expected complexity is ``risk'' and corresponds to the degree of belief updating that incurs a thermodynamic cost; leading to risk-sensitive control (e.g., phototropism). Expected inaccuracy corresponds to ``ambiguity'' leading to epistemic behaviors (e.g., searching for lost keys under a streetlamp) \cite{parr:19}.

When context-dependent control is considered, the neighborhood of the NESS resolves into a network of local minima corresponding to fixed perception-action loops separated by energetic barriers that the control system must overcome to switch between loops.  For example, in a cell, this energetic barrier comprises the energy required to activate one pathway while de-activating another, which may include the energetic costs of phosphorylation, other chemical modifications, additional gene expression, etc.  Different pairs of pathways can be expected to be separated by energetic barriers of different heights, generating a topographically-complex free energy landscape that coarse-grains, in a long-time average, to the neighborhood of the NESS, i.e., to the maintenance of allostasis \cite{barrett:16, corcoran:20, seth:16}.

As noted earlier, we can think of controllable perception-action loops as nodes on a factor graph, with the edges corresponding to pathways for control flow, and the transition probabilities labeling the edges as inversely proportional to the energetic barrier between loops.  This allows representing the GM for meta-level control (i.e., hierarchical) as a message-passing system as described in \cite{friston:17a}.  The presence of very high energetic barriers can render such a GM effectively one-way, as seen in the context-dependent switches between signal transduction pathways and GRNs that characterize cellular differentiation during morphogenesis.  Biological examples of these include modifications of bioelectric pattern memories in planaria, which can create alternative-species head shapes that eventually remodel back to normal \cite{emmons-bell:15}, or produce 2-headed worms which are permanent, and regenerate as 2-headed in perpetuity \cite{oviedo:10}.

\subsection{The QRF picture} \label{qrf}

Cellular information processing has traditionally been treated as completely classical, i.e., as implemented by causal networks of macromolecules, each of which undergoes classical state transitions via local dynamical processes that are conditionally independent of the states of other parts of the network.  While the ``quantum'' nature of proteins and other macromolecules is broadly acknowledged, the scale at which quantum effects are important remains controversial, with straightforward single-molecule decoherence models predicting decoherence times of attoseconds ($10^{-18}$ s) or less \cite{tegmark:00, schloss:07}: several orders of magnitude below the timescales of processes involved in molecular information processing \cite{zwier:10}.  While functional roles for quantum coherence in intramolecular information processing have been demonstrated, intermolecular coherence remains experimentally elusive \cite{marais:18, cao:20, kim:21, baiardi:22}.  

The free-energy budgets of both prokaryotic and eukaryotic cells are, however, orders of magnitude smaller than would be required to support fully-classical information processing at the molecular scale, suggesting that cells employ quantum coherence as a computational resource \cite{fl:21}.  Indirect evidence of longer-range, tissue-scale coherence in brains has also been reported \cite{kerskens:22}.   Reformulating the FEP in quantum information-theoretic terms enables it to describe situations in which long-range coherence, and hence quantum computation, cannot be neglected. 

Following the development in \cite{ffgl:22}, we consider a bipartite decomposition $U = AB$ of a finite, isolated system $U$ for which the interaction Hamiltonian $H_{AB} = H_U - (H_A + H_B)$ is sufficiently weak over the time period of interest that the joint state $U$ is separable (i.e., factors) as $|U \rangle = |A \rangle |B \rangle$.  In this case, we can choose orthogonal basis vectors $|i^k \rangle$ so that:

\begin{equation} \label{ham}
H_{AB} = \beta_k K_B\, T_k \sum_i^N \alpha^k_i M^k_i,
\end{equation}
\noindent
where $K_B$ denotes Boltzmann's constant, $T$ is the absolute temperature of the environment, $k =~A$ or $B$, the $M^k_i$ are $N$ mutually-orthogonal Hermitian operators with eigenvalues in $\{ -1,1 \}$, the $\alpha^k_i \in [0,1]$ are such that $\sum^N_i \alpha^k_i = 1$, and $\beta_k \geq \ln 2$ is an inverse measure of $k$'s thermodynamic efficiency that depends on the internal dynamics $H_k$; see \cite{fg:20a, fgm:21, fm:20, addazi:21} for further motivation and details of this construction and \cite{fgm:22b} for a pedagogical review.  This description is purely topological, attributing no geometry to either $U$ or $\mathscr{B}$; hence it allows the ``embedding space'' of perceived ``objects'' to be an observer-dependent construct.  It has several relevant consequences:

\begin{itemize}
\item We can regard $A$ and $B$ as separated, and determined by independent measures. They are separated by -- and interact via -- a holographic screen $\mathscr{B}$ that can be represented, without loss of generality, by an array of $N$ non-interacting qubits, where $N$ is the dimension of $H_{AB}$ \cite{fm:20, addazi:21}.
\item $A$ and $B$ can be regarded as exchanging finite $N$-bit strings, each of which encodes one eigenvalue of $H_{AB}$ \cite{fm:20}.
\item $A$ and $B$ have free choice of basis for $H_{AB}$, corresponding to free choice of local frames at $\mathscr{B}$, e.g., free choice, for each qubit $q_i$ on $\mathscr{B}$, of the local $z$ axis and hence the $z$-spin operator $s_z$ that acts on $q_i$ \cite{fgm:22b}.
\item Choice of basis corresponds to choosing the zero-point of total energy) by each of $A$ and $B$.  The systems $A$ and $B$ are, therefore, in general at informational, but not at thermal equilibrium \cite{ffgl:22}. 
\item As $A$ and $B$ must obtain from $B$ or $A$, respectively, whatever thermodynamic free energy is required, by Landauer's principle \cite{landauer:61, landauer:99, bennett:82}, to fund the encoding of classical bits on $\mathscr{B}$ (as well as any other irreversible classical computation), $A$ and $B$ must each devote some sector $F$ of $\mathscr{B}$ to free-energy acquisition.  The bits in $F$ are ``burned as fuel'' and so do not contribute input data to computations.  Waste-heat dissipation by one system is free energy acquisition by the other.  The free-energy sectors $F_A$ and $F_B$ of $A$ and $B$ need not align as subsets of qubits on $\mathscr{B}$; that is, qubits that $A$ regards as free-energy sources may be regarded by $B$ as informative outputs and vice-versa \cite{fg:20a, fgm:21}.
\item The actions of the internal dynamics $H_A$ and $H_B$ on $\mathscr{B}$ can be represented by $A$- and $B$-specific sets of QRFs, each of which both ``measures'' and ``prepares'' qubits on $\mathscr{B}$.  Each QRF acts on the qubits in some specific sector of $\mathscr{B}$, breaking the permutation symmetry of Eq. \eqref{ham} \cite{fg:20a, fgm:21, fgm:22a}.  Only QRFs acting on sectors other than $F$ implement informative computations; we will therefore restrict attention to these QRFs.
\item Each ``computational'' QRF can, without loss of generality, be represented by a cone-cocone diagram (CCCD) comprising Barwise-Seligman classifiers and infomorphisms between them \cite{barwise:97, fg:19a}.  The apex of each such CCCD is, by definition, both the category-theoretic limit and colimit of the ``input/output'' classifiers that correspond, formally, to the operators $M^k_i$ in Eq. \eqref{ham} \cite{fg:20a, fgm:21, fgm:22a}.
\end{itemize}

Typically, a CCCD is structured as a distributed information flow in the form:

\begin{equation}\label{cccd-2}
\begin{gathered}
\xymatrix@C=6pc{\mathcal{A}_1 \ar[r]_{g_{12}}^{g_{21}} & \ar[l] \mathcal{A}_2 \ar[r]_{g_{23}}^{g_{32}} & \ar[l] \ldots ~\mathcal{A}_k \\
&\mathbf{C^\prime} \ar[ul]^{h_1} \ar[u]^{h_2} \ar[ur]_{h_k}& \\
\mathcal{A}_1 \ar[ur]^{f_1} \ar[r]_{g_{12}}^{g_{21}} & \ar[l] \mathcal{A}_2 \ar[u]_{f_2} \ar[r]_{g_{23}}^{g_{32}} & \ar[l] \ldots ~\mathcal{A}_k \ar[ul]_{f_k}
}
\end{gathered}
\end{equation}
\noindent
incorporating sets of classifiers $\{A_{\alpha}\}$ and (logic) infomorphisms $\{f_i, g_{jk}\}$ \cite[Ch 12]{barwise:97} over suitable index ranges.  As a memory-write system, Diagram \eqref{cccd-2} depicts a generic a blueprint for a bow-tie or variational autoencoder (VAE) network amenable to describing a hierarchical Bayesian network with belief-updating as discussed in e.g. \cite{ffgl:22,fg:21,fgm:22a}. Crucially, it is the non-commutativity of CCCDs of this form that specifies intrinsic or quantum contextuality, as occurs, for instance, when the colimit core $\mathbf{C^\prime}$ is undefinable \cite[\S7, \S 8]{fg:21} \cite[\S 7.2]{fgm:22a}. Consequences of such contextualityare discussed via examples in \S\ref{applications}.

The holographic screen $\mathscr{B}$ functions as an MB separating $A$ from $B$.  It can be regarded as having an $N$-dimensional, $N$-qubit Hilbert space $\mathcal{H}_{q_i} = \prod_i q_i$.  While $\mathcal{H}_{q_i}$ is strictly ancillary to $\mathcal{H}_U = \mathcal{H}_A \otimes \mathcal{H}_B$, the classical situation can be recovered in the limit in which the entanglement entropies $\mathcal{S}(|A \rangle), \mathcal{S}(|B \rangle) \rightarrow 0$ by considering the products $\mathcal{H}_A \otimes \mathcal{H}_{q_i}$ and $\mathcal{H}_B \otimes \mathcal{H}_{q_i}$ to be ``particle'' state spaces for $A$ and $B$, respectively.  In this classical limit, the states of $\mathcal{H}_{q_i}$ become the blanket states of an MB that functions as a classical information channel \cite{fm:20, addazi:21, fgm:22b}.  In quantum holographic coding, for example, $\mathscr{B}$ is often represented by a polygonal tessellation of the hyperbolic disc, with qubits represented by polygonal centroids.  A specific TN model of a pentagon code is developed in \cite{pastawski:15}; see in particular their Fig. 4.  The geometric description of $\mathscr{B}$ as implementing holographic coding, and its classical limit as an MB structured as a direct acyclic graph (DAG), is further explored in the setting of TQNNs in \cite{fgm:23}.

In this quantum-theoretic picture, ``systems'' or ``objects'' observed and manipulated by $A$ or $B$ correspond to sectors on $\mathscr{B}$ that are the domains of particular QRFs deployed by $A$ or $B$, respectively \cite{fgm:21, ffgl:22, fgm:22a}.  To simplify notation, we use the same symbol, e.g., `$Q$' to denote both a QRF $Q$ and the sector $dom(Q)$ on $\mathscr{B}$.  Any identifiable system $X$ factors into a ``reference'' component $R$ that maintains a time-invariant state $|R \rangle$ or more generally, state density $\rho_R$, that allows re-identification and hence sequential measurements over extended time, and a ``pointer'' component $P$ with a time-varying state $|P \rangle$ or density $\rho_P$.  It is this pointer component, named for the pointer of an analog instrument, which is the ``state of interest'' for measurements.  The QRFs $R$ and $P$ clearly must commute, and the sectors $R$ and $P$ clearly must be mutually decoherent \cite{fgm:21, ffgl:22, fgm:22a}.  All ``system'' sectors must be components of some overall sector $E$ that corresponds to the ``observable environment.''  The recording of measurement outcomes to a classical memory and the reading of previously-recorded outcomes from memory can similarly be represented by a QRF $Y$.  As $dom(Y)$ is a sector on $\mathscr{B}$, recorded memories of $A$ are exposed to and hence subject to modification by $B$ and vice-versa.  Both the observable environment $E$ and the memory sector $Y$ must be disjoint from, and decoherent with, the free-energy sector $F$.  

As actions on $\mathscr{B}$ encode classical data, they have an associated free energy cost of at least ln2 $K_B T$ per bit \cite{landauer:61, landauer:99, bennett:82} that must originate from the source at $F$.  Time-energy complementary associates a minimum time of $h / [\mathrm{ln} 2 (K_B T)]$, with $h$ being the Planck's constant, to this energy expenditure.  We can, therefore, associate actions on $\mathscr{B}$, including memory writes, with ``ticks'' of an internal time QRF, which we denote $t_A$ and $t_B$ for $A$ and $B$, respectively.  Assuming all observational outcomes are written to memory, we can represent the situation as in Fig. \ref{mem-write-fig}.  The time QRF is effectively an outgoing bit counter that can be represented by a groupoid operator $\mathcal{G}_{ij}: t_i \rightarrow t_j$ \cite{fg:20a}.  As outgoing bits are oriented in opposite directions with respect to $\mathscr{B}$ for $A$ and $B$, the time ``arrows'' $t_A$ and $t_B$ point in opposite directions.  Hence $A$ and $B$ can both be regarded as ``interacting with their own futures'' as discussed in \cite{fgm:22b}.

\begin{figure}[H]
\centering
\includegraphics[width=10 cm]{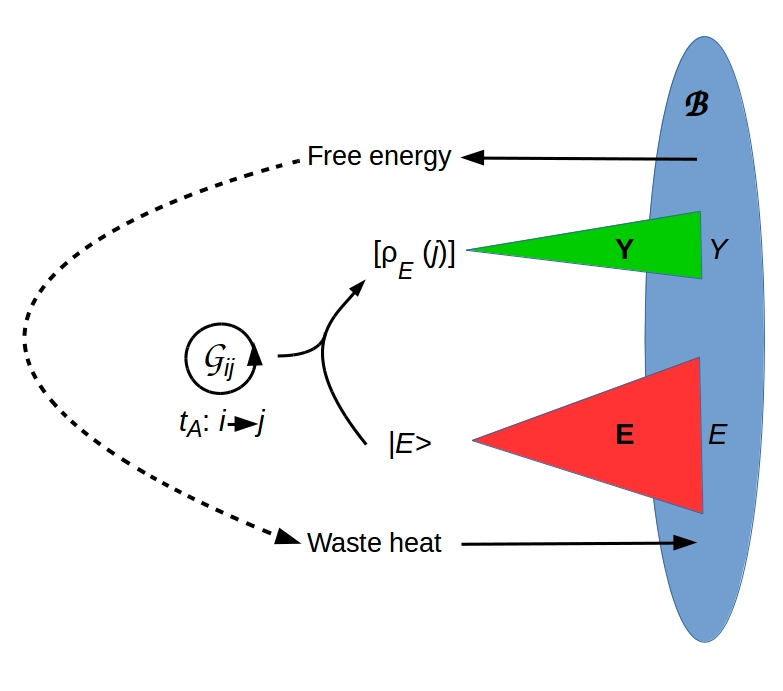}
\caption{Cartoon illustration of QRFs required to observe and write a readable memory of an environmental state $\vert E \rangle$.  The QRFs $\mathbf{E}$ and $\mathbf{Y}$ read the state from $E$ and write it to the memory $Y$ respectively.  Any identified system $S$ must be part of $E$.  The clock $\mathcal{G}_{ij}$ is a time QRF that defines the time coordinate $t_A$.  The dashed arrow indicates the observer's thermodynamic process that converts free energy obtained from the unobserved sector $F$ of $\mathscr{B}$ to waste heat exhausted through $F$.  Adapted from \cite{fgm:21}, CC-BY license.}
\label{mem-write-fig}
\end{figure}

Measurements of a system $X$ can be considered sequential if: 1) they separated in time according to the internal time QRF, and 2) their outcomes are recorded to memory to enable comparability across time.  We show in \cite{fgm:22a} that sequential measurements can always be represented by one of two schemata.  Using the compact notation:

\begin{equation} \label{QRF-1}
\begin{gathered}
\begin{tikzpicture}
\draw [thick] (0,0) -- (2,1) -- (2,-1) -- (0,0);
\node at (1.3,0) {$S$};
\end{tikzpicture}
\end{gathered}
\end{equation}
\noindent
to represent a QRF $S$, we can represent measurements of a physical situation in which one system divides into two, possibly entangled, systems with a diagram of the form:

\begin{equation} \label{QRF-2}
\begin{gathered}
\begin{tikzpicture}
\node at (0,0) {$S$};
\draw [thick] (0.2,0) -- (1,1) -- (2,1.5) -- (2,0.5) -- (1,1);
\node at (1.7,1) {$S_1$};
\draw [thick] (0.2,0) -- (1,-1) -- (2,-0.5) -- (2,-1.5) -- (1,-1);
\node at (1.7,-1) {$S_2$};
\end{tikzpicture}
\end{gathered}
\end{equation}
\noindent
Parametric down-conversion of a photon exemplifies this kind of process.  The reverse process can be added to yield:

\begin{equation} \label{flow-1}
\begin{gathered}
\begin{tikzpicture}
\node at (0,0) {$S$};
\draw [thick] (0.2,0) -- (1,1) -- (2,1.5) -- (2,0.5) -- (1,1);
\node at (1.7,1) {$S_1$};
\draw [thick] (0.2,0) -- (1,-1) -- (2,-0.5) -- (2,-1.5) -- (1,-1);
\node at (1.7,-1) {$S_2$};
\draw [thick] (-2.7,0) -- (-1.7,0.5) -- (-1.7,-0.5) -- (-2.7,0);
\node at (-2,0) {$S$};
\draw [thick, ->] (-1.5,0) -- (-0.3,0);
\draw [thick, ->] (2.2,0) -- (3.4,0);
\draw [thick] (3.6,0) -- (4.6,0.5) -- (4.6,-0.5) -- (3.6,0);
\node at (4.3,0) {$S$};
\end{tikzpicture}
\end{gathered}
\end{equation}
\noindent
In the second type of sequential measurement process, the pointer-state QRF $P$ is replaced with an alternative QRF $Q$ with which it does not commute.  Sequences in which position and momentum, $s_z$ and $s_x$ are measured alternately are examples.  These can be represented by the diagram

\begin{equation} \label{flow-2}
\begin{gathered}
\begin{tikzpicture}
\node at (0,0) {$\mathbf{S}$};
\draw [thick] (0.2,0) -- (1,1) -- (2,1.5) -- (2,0.5) -- (1,1);
\node at (1.7,1) {$\mathbf{P}$};
\draw [thick] (0.2,0) -- (1,-1) -- (2,-0.5) -- (2,-1.5) -- (1,-1);
\node at (1.7,-1) {$\mathbf{R}$};
\draw [thick] (-2.7,0) -- (-1.7,0.5) -- (-1.7,-0.5) -- (-2.7,0);
\node at (-2,0) {$\mathbf{S}$};
\draw [thick, ->] (-1.5,0) -- (-0.3,0);
\draw [thick, ->] (2.2,0) -- (3.4,0);
\node at (3.7,0) {$\mathbf{S}$};
\draw [thick] (3.9,0) -- (4.7,1) -- (5.7,1.5) -- (5.7,0.5) -- (4.7,1);
\node at (5.4,1) {$\mathbf{Q}$};
\draw [thick] (3.9,0) -- (4.7,-1) -- (5.7,-0.5) -- (5.7,-1.5) -- (4.7,-1);
\node at (5.4,-1) {$\mathbf{R}$};
\draw [thick, ->] (5.9,0) -- (7.1,0);
\draw [thick] (7.3,0) -- (8.3,0.5) -- (8.3,-0.5) -- (7.3,0);
\node at (8,0) {$\mathbf{S}$};
\end{tikzpicture}
\end{gathered}
\end{equation}
\noindent
As both $P$ and $Q$ must commute with $R$, the commutativity requirements for $S$ are satisfied.

The sequences of operations depicted in Diagrams \eqref{flow-1} and \eqref{flow-2} clearly raise the questions of how control is implemented, and of how the context changes that drive control flow are detected.  Before turning to these questions in \S \ref{tn}, we review a path-integral representation of QRFs, show that the same representation captures the behavior of any system $X$ identified by a QRF, and discuss the questions of multiple observers and quantum contextuality.

\subsection{The TQFT picture} \label{tqft}

As a least-action principle, the FEP is fundamentally a statement about the paths followed by the joint system $U$ through its state space.  The classical FEP is amenable to a path-integral formulation \cite{friston:22} that expresses the expected value of any observable (functional) $\Omega[x(t)]$ of paths $x(t)$ through the relevant state space as (\cite{seifert:12}, Eq. 6):

\begin{equation} \label{class-path-int}
\langle \Omega[x(t)] \rangle = \int d x_0 \int d [x(t)] \Omega[x(t)] p( x(t) | x_0) p_0 (x_0)
\end{equation}
\noindent
where $x_0$ is the initial state and $p( x(t) | x_0)$ is the conditional probability of the path $x(t)$.  Quantum theory generalizes this expression by, effectively. replacing $\Omega[x(t)]$ with an automorphism on the relevant Hilbert space and $p( x(t) | x_0)$ with an amplitude for $x(t)$ given the initial state $x_0$.  For some finite-dimensional Hilbert space $\mathcal{H}$, the manifold of all such automorphisms is a cobordism on $\mathcal{H}$, which is by definition a TQFT on $\mathcal{H}$ \cite{atiyah:88}. 

We show in \cite{fgm:22a} that any sequential measurement of any sector $X$ of $\mathscr{B}$ induces a TQFT on $X$, considered as a projection of the $N$-dimensional boundary Hilbert space $\mathcal{H}_{q_i}$ associated with $\mathscr{B}$.  In particular, measurement sequences of the form of Diagram \eqref{flow-1} can be mapped to cobordisms, i.e., to manifolds of maps between two designated boundaries, of the form:

\begin{equation} \label{CCCD-to-Cob}
\begin{gathered}
\begin{tikzpicture}[every tqft/.append style={transform shape}]
\draw[rotate=90] (0,0) ellipse (2.5cm and 1 cm);
\node[above] at (0,1.7) {$\mathscr{B}$};
\node at (-0.4,0) {$S$};
\begin{scope}[tqft/every boundary component/.style={draw,fill=green,fill opacity=1}]
\begin{scope}[tqft/cobordism/.style={draw}]
\begin{scope}[rotate=90]
\pic[tqft/cylinder, name=a];
\pic[tqft/pair of pants, anchor=incoming boundary 1, name=b, at=(a-outgoing boundary 1)];
\end{scope}
\end{scope}
\end{scope}
\draw[rotate=90] (0,-4) ellipse (2.5cm and 1 cm);
\node[above] at (4,1.7) {$\mathscr{B}$};
\node at (2,0.7) {$\mathscr{S}$};
\node at (4.4,1.1) {$S_1$};
\node at (4.4,-1.1) {$S_2$};
\draw [thick, <-] (0,-2.7) -- (0,-3.8);
\draw [thick, <-] (4,-2.7) -- (4,-3.8);
\draw [thick] (-0.5,-5.3) -- (0.5,-4.8) -- (0.5,-5.8) -- (-0.5,-5.3);
\node at (0.2,-5.3) {$\mathbf{S}$};
\draw [thick] (3.5,-4.5) -- (4.5,-4) -- (4.5,-5) -- (3.5,-4.5);
\node at (4.2,-4.5) {$\mathbf{S_1}$};
\draw [thick] (3.5,-6.1) -- (4.5,-5.6) -- (4.5,-6.6) -- (3.5,-6.1);
\node at (4.2,-6.1) {$\mathbf{S_2}$};
\draw [thick] (3.5,-4.5) -- (2.5,-5.3) -- (3.5,-6.1);
\draw [thick, ->] (0.7,-5.3) -- (2.3,-5.3);
\node at (-0.5,-3.3) {$\mathfrak{F}(i)$};
\node at (4.5,-3.3) {$\mathfrak{F}(k)$};
\node at (1.5,-5.6) {$\mathscr{F}$};
\end{tikzpicture}
\end{gathered}
\end{equation}

while sequences of the form of Diagram \eqref{flow-2} can be mapped to cobordisms of the form:

\begin{equation} \label{CCCD-to-Cob-2}
\begin{gathered}
\begin{tikzpicture}[every tqft/.append style={transform shape}]
\draw[rotate=90] (0,0) ellipse (2.5cm and 1 cm);
\node[above] at (0,1.7) {$\mathscr{B}$};
\node at (-0.5,1.1) {$P$};
\node at (-0.5,-1.1) {$R$};
\begin{scope}[tqft/every boundary component/.style={draw,fill=green,fill opacity=1}]
\begin{scope}[tqft/cobordism/.style={draw}]
\begin{scope}[rotate=90]
\pic[tqft/reverse pair of pants, at={(-1,0)}, name=a];
\pic[tqft/pair of pants, anchor=incoming boundary 1, name=b, at=(a-outgoing boundary 1)];
\end{scope}
\end{scope}
\end{scope}
\draw[rotate=90] (0,-4) ellipse (2.5cm and 1 cm);
\node[above] at (4,1.7) {$\mathscr{B}$};
\node at (2,0.7) {$\mathscr{S}$};
\node at (4.4,1.1) {$Q$};
\node at (4.4,-1.1) {$R$};
\draw [thick, <-] (0,-2.7) -- (0,-3.8);
\draw [thick, <-] (4,-2.7) -- (4,-3.8);
\draw [thick] (-0.5,-4.5) -- (0.5,-4) -- (0.5,-5) -- (-0.5,-4.5);
\node at (0.2,-4.5) {$\mathbf{P}$};
\draw [thick] (-0.5,-6.1) -- (0.5,-5.6) -- (0.5,-6.6) -- (-0.5,-6.1);
\node at (0.2,-6.1) {$\mathbf{R}$};
\draw [thick] (-0.5,-4.5) -- (-1.6,-5.3) -- (-0.5,-6.1);
\node at (-1.8,-5.3) {$\mathbf{S}$};
\draw [thick] (3.5,-4.5) -- (4.5,-4) -- (4.5,-5) -- (3.5,-4.5);
\node at (4.2,-4.5) {$\mathbf{Q}$};
\draw [thick] (3.5,-6.1) -- (4.5,-5.6) -- (4.5,-6.6) -- (3.5,-6.1);
\node at (4.2,-6.1) {$\mathbf{R}$};
\draw [thick] (3.5,-4.5) -- (2.5,-5.3) -- (3.5,-6.1);
\draw [thick, ->] (0.7,-5.3) -- (2.3,-5.3);
\node at (-0.5,-3.3) {$\mathfrak{F}(i)$};
\node at (4.5,-3.3) {$\mathfrak{F}(k)$};
\node at (1.5,-5.6) {$\mathscr{F}$};
\end{tikzpicture}
\end{gathered}
\end{equation}

In either case, $\mathfrak{F}: \mathbf{CCCD} \rightarrow \mathbf{Cob}$ is the functor from the category $\mathbf{CCCD}$ of CCCDs (and hence of QRFs) to the category of $\mathbf{Cob}$ finite cobordisms required to define a TQFT.  In general, we can state:

\begin{theorem}[\cite{fgm:22a} Thm. 1] \label{thm1}
For any morphism $\mathscr{F}$ of CCCDs in $\mathbf{CCCD}$, there is a cobordism $\mathscr{S}$ such that a diagram of the form of Diagram \eqref{CCCD-to-Cob} or \eqref{CCCD-to-Cob-2} commutes.
\end{theorem}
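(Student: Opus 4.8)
The statement is, at bottom, a functoriality claim: once $\mathfrak{F}:\mathbf{CCCD}\to\mathbf{Cob}$ is established as a well-defined functor, the commuting square is immediate, with $\mathscr{S}:=\mathfrak{F}(\mathscr{F})$ and the vertical legs $\mathfrak{F}(i),\mathfrak{F}(k)$ being $\mathfrak{F}$ evaluated on the input and output classifiers of the two CCCDs; commutativity then simply expresses that $\mathfrak{F}$ respects the factorization of $\mathscr{F}$ through the cone--cocone apex. So the plan is: (1) fix the two categories; (2) construct $\mathfrak{F}$ on objects and on morphisms; (3) check functoriality and read off the diagram.

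For (1): recall from \cite[Ch.~12]{barwise:97} that a classifier is a triple of tokens, types and a satisfaction relation, and an infomorphism is a contravariant--covariant pair obeying the fundamental adjointness; a CCCD is a finite diagram of classifiers and infomorphisms carrying an apex (core) $\mathbf{C}$ that is simultaneously a limit of the ``input'' legs and a colimit of the ``output'' legs. I take $\mathbf{CCCD}$ to have such diagrams as objects and, as morphisms, families of infomorphisms between corresponding classifiers commuting with all structure maps; by the universal property of the apex, each such morphism is determined by an infomorphism $\mathbf{C}_1\to\mathbf{C}_2$ together with the induced maps on legs. On the other side, $\mathbf{Cob}$ is the symmetric monoidal category whose objects are closed oriented $1$-manifolds --- finite disjoint unions of circles, one per qubit sector on $\mathscr{B}$ --- and whose morphisms are oriented $2$-cobordisms composed by gluing along shared boundary, generated by the cylinder, the cup and cap, and the two pairs of pants.

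For (2): on objects, send a CCCD for a QRF $S$ to the $1$-manifold bounding the qubits of $dom(S)\subseteq\mathscr{B}$, and send the cone--cocone shape itself to the $2$-cobordism realizing the corresponding split/merge, the limit (cone) side fixing the incoming boundary and the colimit (cocone) side the outgoing boundary. On morphisms, argue by the case analysis already laid out before the theorem: every morphism $\mathscr{F}$ of CCCDs corresponds to a sequential--measurement schema of one of two types --- the ``division'' schema of Diagram~\eqref{flow-1} or the ``non-commuting pointer'' schema of Diagram~\eqref{flow-2} --- and for each one writes the interpolating surface explicitly, namely a cylinder followed by a pair of pants (Diagram~\eqref{CCCD-to-Cob}) in the first case and a reverse pair of pants followed by a pair of pants (Diagram~\eqref{CCCD-to-Cob-2}) in the second, the commutativity of $P$, $Q$ with $R$ ensuring that the boundary circles match up; set $\mathscr{S}$ to this surface. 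For (3), check that $\mathfrak{F}(\mathrm{id})$ is the identity cylinder and $\mathfrak{F}(\mathscr{G}\circ\mathscr{F})=\mathfrak{F}(\mathscr{G})\circ\mathfrak{F}(\mathscr{F})$, i.e.\ that composing CCCD morphisms corresponds to gluing the surfaces along their common boundary (routine once the boundary bookkeeping of (2) is fixed); then for the given $\mathscr{F}$ put $\mathscr{S}=\mathfrak{F}(\mathscr{F})$ and the relevant square commutes by functoriality.

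The step I expect to be the real obstacle is the well-definedness in (2): proving that the surface assigned to a CCCD is independent of which legs are designated ``input'' versus ``output'' and of the presentation of the diagram, and pinning down precisely why ``$\mathbf{C}$ is simultaneously limit and colimit'' is exactly the hypothesis needed --- a cobordism requires a clean partition of its boundary into incoming and outgoing pieces, the cone structure supplying the former and the cocone the latter, so a CCCD whose apex fails to be both (the contextual case flagged after Diagram~\eqref{cccd-2}, where the colimit core $\mathbf{C^\prime}$ is undefinable) simply has no well-posed image under $\mathfrak{F}$. A secondary subtlety is that objects of $\mathbf{CCCD}$ may themselves be non-commutative diagrams, so one must check that $\mathbf{CCCD}$ is nonetheless a genuine category and that the construction of $\mathfrak{F}$ invokes commutativity nowhere except at the apex.
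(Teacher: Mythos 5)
The paper does not actually prove this statement: Theorem~\ref{thm1} is imported verbatim from \cite{fgm:22a} and the text explicitly defers to that reference for the proof, so there is no in-paper argument to compare yours against. That said, your reconstruction is consistent with everything the surrounding text commits to --- it treats $\mathfrak{F}:\mathbf{CCCD}\rightarrow\mathbf{Cob}$ as a genuine functor, reduces commutativity of Diagrams \eqref{CCCD-to-Cob} and \eqref{CCCD-to-Cob-2} to functoriality applied to the two sequential-measurement schemata of Diagrams \eqref{flow-1} and \eqref{flow-2}, and correctly locates the real mathematical content in the well-definedness of the object and morphism assignments rather than in the final diagram chase. Your observation that the cone supplies the incoming boundary and the cocone the outgoing boundary, so that a CCCD whose colimit core is undefinable (the contextual case flagged after Diagram \eqref{cccd-2}) has no well-posed image, is exactly the right structural point and matches the role the paper assigns to the limit/colimit apex. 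The one caveat is that your steps (2) and (3) are a plan rather than a proof: the independence of the assigned surface from the presentation of the CCCD, and the verification that composition of CCCD morphisms corresponds to gluing of cobordisms, are asserted but not carried out, and these are precisely the parts that the cited reference has to do the work for. As a blind reconstruction of the intended argument, however, the approach is sound.
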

\noindent
referring to \cite{fgm:22a} for the proof.

Theorem \ref{thm1} applies to any sequential measurement; therefore, it applies to measurements of a sector $X$ followed by measurements of the associated memory sector $Y$, or vice versa.  Assuming for convenience $dim(X) = dim(Y)$, we can consider a composite operation $Q = (\overrightarrow{Q}, \overleftarrow{Q})$, where $\overrightarrow{Q} = Q_X Q_Y$ and $\overleftarrow{Q} = Q_Y Q_X$, is then a pair of QRF sequences that can be identified with TQFTs that measure and record an outcome, mapping $\mathcal{H}_X \rightarrow \mathcal{H}_Y$, and dually use an outcome read from memory to prepare a state, mapping $\mathcal{H}_Y \rightarrow \mathcal{H}_X$, respectively as in Diagram \ref{bipartite-ops}:

\begin{equation} \label{bipartite-ops}
\begin{gathered}
\begin{tikzpicture}[every tqft/.append style={transform shape}]
\draw[rotate=90] (0,0) ellipse (2.8cm and 1 cm);
\node[above] at (0,1.7) {$\mathscr{B}$};
\draw [thick] (-0.2,1.6) arc [radius=1.6, start angle=90, end angle= 270];
\draw [thick] (-0.2,1) arc [radius=1, start angle=90, end angle= 270];
\draw[rotate=90,fill=green,fill opacity=1] (1.3,0.2) ellipse (0.3 cm and 0.2 cm);
\draw[rotate=90,fill=green,fill opacity=1] (-1.3,0.2) ellipse (0.3 cm and 0.2 cm);
\node[above] at (-2.2,-0.3) {$Q$};
\draw [ultra thick, white] (-0.8,1.3) -- (-1,1.3);
\draw [ultra thick, white] (-0.8,1.1) -- (-1,1.1);
\draw [ultra thick, white] (-0.8,0.9) -- (-1,0.9);
\draw [ultra thick, white] (-0.8,-0.9) -- (-1,-0.9);
\draw [ultra thick, white] (-0.8,-1.1) -- (-1,-1.1);
\draw [ultra thick, white] (-0.8,-1.3) -- (-1,-1.3);
\end{tikzpicture}
\end{gathered}
\end{equation}
\noindent
This composite operator $Q$ is, by Theorem \ref{thm1}, a TQFT \cite{fgm:23}.  Hence the operation of recording observational outcomes for a sector $X$ made at $t$ to memory, and then comparing them to later observations at $t + \Delta t$, is formally equivalent to propagating the ``system'' $X$ forward in time from $t$ to $t + \Delta t$.  

Identifying QRFs as ``internal'' TQFTs allows a general analysis of information exchange between multiple QRFs deployed by a single system, e.g., $A$.  Because all QRFs act on $\mathscr{B}$, information exchange between QRFs requires a channel that traverses $B$.  Any such channel is itself a QRF, one deployed by $B$.  Considering $A$ to comprise two observers, one deploying $Q_1$ and the other deploying $Q_2$, that interact via a local operations, classical communication (LOCC \cite{chitambar:14}) protocol provides an example:

\begin{equation} \label{locc-diag}
\begin{gathered}
\begin{tikzpicture}[every tqft/.append style={transform shape}]
\draw[rotate=90] (0,0) ellipse (2.8cm and 1 cm);
\node[above] at (0,1.9) {$\mathscr{B}$};
\draw [thick] (-0.2,1.9) arc [radius=1, start angle=90, end angle= 270];
\draw [thick] (-0.2,1.3) arc [radius=0.4, start angle=90, end angle= 270];
\draw[rotate=90,fill=green,fill opacity=1] (1.6,0.2) ellipse (0.3 cm and 0.2 cm);
\draw[rotate=90,fill=green,fill opacity=1] (0.2,0.2) ellipse (0.3 cm and 0.2 cm);
\draw [thick] (-0.2,-0.3) arc [radius=1, start angle=90, end angle= 270];
\draw [thick] (-0.2,-0.9) arc [radius=0.4, start angle=90, end angle= 270];
\draw[rotate=90,fill=green,fill opacity=1] (-0.6,0.2) ellipse (0.3 cm and 0.2 cm);
\draw[rotate=90,fill=green,fill opacity=1] (-2.0,0.2) ellipse (0.3 cm and 0.2 cm);
\draw [rotate=180, thick, dashed] (-0.2,0.9) arc [radius=0.7, start angle=90, end angle= 270];
\draw [rotate=180, thick, dashed] (-0.2,0.3) arc [radius=0.1, start angle=90, end angle= 270];
\draw [thick] (-0.2,0.5) -- (0,0.5);
\draw [thick] (-0.2,-0.1) -- (0,-0.1);
\draw [thick] (-0.2,-0.9) -- (0,-0.9);
\draw [thick] (-0.2,-0.3) -- (0,-0.3);
\draw [thick, dashed] (0,0.5) -- (0.2,0.5);
\draw [thick, dashed] (0,-0.1) -- (0.2,-0.1);
\draw [thick, dashed] (0,-0.9) -- (0.2,-0.9);
\draw [thick, dashed] (0,-0.3) -- (0.2,-0.3);
\node[above] at (-3.5,1.7) {$A$};
\node[above] at (2.8,1.7) {$B$};
\draw [ultra thick, white] (-0.9,1.5) -- (-0.7,1.5);
\draw [ultra thick, white] (-1,1.3) -- (-0.8,1.3);
\draw [ultra thick, white] (-1,1.1) -- (-0.8,1.1);
\draw [ultra thick, white] (-1,0.9) -- (-0.8,0.9);
\draw [ultra thick, white] (-1.1,0.7) -- (-0.8,0.7);
\draw [ultra thick, white] (-1.1,0.5) -- (-0.8,0.5);
\draw [ultra thick, white] (-1,-0.9) -- (-0.8,-0.9);
\draw [ultra thick, white] (-1,-1.1) -- (-0.8,-1.1);
\draw [ultra thick, white] (-1,-1.3) -- (-0.8,-1.3);
\draw [ultra thick, white] (-0.9,-1.5) -- (-0.7,-1.5);
\draw [ultra thick, white] (-0.9,-1.7) -- (-0.7,-1.7);
\draw [ultra thick, white] (-0.8,-1.9) -- (-0.6,-1.9);
\draw [ultra thick, white] (-0.8,-2.1) -- (-0.6,-2.1);
\node[above] at (-1.3,1.4) {$Q_1$};
\node[above] at (-1.3,-2.4) {$Q_2$};
\draw [rotate=180, thick] (-0.2,2.3) arc [radius=2.1, start angle=90, end angle= 270];
\draw [rotate=180, thick] (-0.2,1.7) arc [radius=1.5, start angle=90, end angle= 270];
\draw [thick] (-0.2,1.9) -- (0,1.9);
\draw [thick] (-0.2,1.3) -- (0,1.3);
\draw [thick, dashed] (0.2,1.9) -- (0,1.9);
\draw [thick, dashed] (0.2,1.3) -- (0,1.3);
\draw [thick] (-0.2,-1.7) -- (0,-1.7);
\draw [thick] (-0.2,-2.3) -- (0,-2.3);
\draw [thick, dashed] (0.2,-1.7) -- (0,-1.7);
\draw [thick, dashed] (0.2,-2.3) -- (0,-2.3);
\draw [ultra thick, white] (0.3,2) -- (0.3,1.2);
\draw [ultra thick, white] (0.5,2) -- (0.5,1.2);
\draw [ultra thick, white] (0.7,1.9) -- (0.7,1.1);
\draw [ultra thick, white] (0.3,-2.4) -- (0.3,-1.5);
\draw [ultra thick, white] (0.5,-2.4) -- (0.5,-1.5);
\draw [ultra thick, white] (0.7,-1.8) -- (0.7,-1.5);
\node[above] at (4.5,-2.4) {Classical channel};
\draw [thick, ->] (2.9,-2) -- (0.7,-0.8);
\node[above] at (4.5,-1.4) {Quantum channel};
\draw [thick, ->] (2.9,-0.9) -- (2.3,-0.6);
\end{tikzpicture}
\end{gathered}
\end{equation}
\noindent
In a LOCC protocol, one channel is considered ``classical'' while the other is considered ``quantum''; however, this language masks the fact that both channels are physical.  As pointed out in \cite{tipler:14}, all media supporting classical communication are physical, and interactions with these media are always local measurements or preparations.  Hence the two channels in a LOCC protocol are physically equivalent -- both are TQFTs implemented by $B$ -- although their conventional semantics are different.

Diagram \eqref{locc-diag} can, clearly also represent externally-mediated communication between any two functional components of a system, e.g., macromolecular pathways within a cell or functional networks within a brain.  We show in \cite{fgm:23} that whenever $Q_1$ and $Q_2$ are deployed by distinct -- technically, separable or mutually decoherent -- ``observers'' or ``systems,'' they fail to commute, i.e., the commutator $[Q_1, Q_2] = Q_1 Q_2 - Q_2 Q_1 \geq h/2$, where again $h$ is Planck's constant.  As shown in \cite{fg:21}, Theorem 3.4 using the CCCD representation, non-commutativity of QRFs induces quantum contextuality, i.e., dependence of measurement results on ``non-local hidden variables'' that characterize the measurement context \cite{bell:66, kochen:67, mermin:93}.  In the current context, such hidden variables characterize the action of $H_B$ on $\mathscr{B}$, affecting what $A$ will observe next in every cycle of $A$-$B$ interaction.  

As shown in \cite{Dzha2018}, such context dependence can, in principle, be captured classically if sufficient measurements of the context can be implemented.  Such measurements would, however, have to access all of $B$.  The existence of an MB prevents such access; in the current setting, $A$ has access to $B$ only via $\mathscr{B}$.   The finite energetic cost of measurement, and consequent requirement for a thermodynamic sector $F$, prevents measurement even of all of $\mathscr{B}$ by any finite physical system.  Hence, we can expect physical systems, including all biological systems, to employ only local context-dependent control to switch between mutually non-commuting (sets of) QRFs.  How context switches implemented by QRF switches induce evolution, development and learning was introduced in \cite{fgl:21}. Some specific of context switching will be discussed \S \ref{applications}.

\section{Tensor network representation of control flow} \label{tn}


\subsection{Tensor networks and holographic duality}

Entanglement and quantum error correction, two concepts developed in quantum information theory, have been proved to have a fundamental role in unveiling quantum gravity \cite{Qi:18}. At the origin of this consideration there has been the discovery by Bekenstein and Hawking \cite{Bek:72,Bek:73,Haw:71,Haw:74} that the second law of thermodynamics can be preserved in the gravitational field of a black hole, if this latter has an entropy proportional to the area of its horizon, by the inverse of the Newton gravitational constant G. This entropy is maximal, as implied by the second law itself, providing an upper bound for possible configurations of matter within a region of the same size \cite{Suss:95,Bous:02}.

Nonetheless, the scaling of the local degrees of freedom counted by the entropy does not increase as the volume, hinging toward the formulation of the holographic conjecture \cite{'tHo:93}, suggesting a division between the information that can only be retrieved on the boundary world, and a merely apparent bulk world. AdS/CFT realized the holographic conjecture, postulating a duality between gravity in asymptotically AdS space and quantum field theory on the spatial infinity of the AdS space \cite{Mal:99}. Giving  literal meaning to the duality, Ryu and Takayanagi (RT) proposed that entanglement of a boundary region fulfils the same law as for the black hole entropy, replacing the area of the black hole horizon with an extremal surface area that bounds the bulk region under scrutiny. 

While on the boundaries the theory can be individuated assigning a specific conformal field theory (CFT), in the bulk the geometry can be associated to specific entanglement structures of the quantum systems. This is for instance what happens to the ground states of a CFT associated to an AdS space: the RT area surface increases less fast than the volume of the boundary. When the boundary is at equilibrium, in a thermal state of finite temperature, the bulk geometry corresponds to that of a black hole, its horizon being parallel to the boundary and its size increasing with the temperature. The RT surface is then confined between the boundary and the back hole horizon, approaching the boundary at higher temperature and increasing its entropy. These considerations suggest the existence of a subtle link interconnecting the structure of space-time and quantum entanglement, and hence that a theory of quantum gravity must be fundamentally holographic, where its states satisfy the RT formula for some bulk geometry.  

The existence of an exact correspondence between bulk gravity and quantum theory at the boundary may hinge toward possible inconsistencies with locality. This has been discussed in the literature, in terms of local reconstruction theory \cite{Alm:15,Hu:12,Hea:14}: variables in the bulk (e.g. bulk spins) can be controlled instantaneously from the boundary, but requiring simultaneous access to a large portion of the boundary: locality and upper speed of light do not hold exactly in this theory. Nonetheless, local observers confined in small regions at the boundary still fulfil locality and the existence of an upper limit of the speed of information exchange, in a way that is reminiscent of quantum error correction code (QECC) in quantum information theory: information is stored redundantly, in such a way that when part of it is corrupted, a reconstruction of information is still possible. Locality in the bulk is therefore a QECC property of the encoding map that realizes the duality between bulk and boundary. On the other hand, these properties are strictly connected to RT, which provides the necessary resource of entanglement for QEEC to emerge.

The RT formula and QECC are properties fulfilled by different classes of models, among which TNs \cite{Swin:12}. These have been first introduced in condensed matter physics as variational wave-functions of strongly correlated systems \cite{Whi:92,Ver:08}.  TNs are many-body wavefunctions that can be derived composing few-body quantum states, which are indeed tensors. A prototype of TN is e.g. the Einstein-Podolsky-Rosen (EPR) entangled pairs of qubits: in an entangled basis, measured qbits are in some entangled pure state and can be composed with remaining ones with increasing complexity: complicated quantum entanglement can be derived by only entangling a few qubits \cite{Di:99}.

Particularly relevant for its implications on the reconstruction of the space-time structure is the multi-scale entanglement renormalization ansatz (MERA) \cite{Vid:08}. TNs can be naturally related to holography duality by considering that their entanglement entropy can be controlled by their graph geometry. Some versions of TNs that are characterized by RT entanglement entropy and QEEC have been constructed resorting to stabilizer codes \cite{Pa:15,Yang:16} and random tensors with large bond dimension \cite{Hay:16}. TNs with random tensors at each node can be regarded as random states restricted by the topology of the network. Exactly as random states are almost maximally entangled, random TNs show through the RT formula an almost maximal entanglement, providing a large family of states with interesting properties to explore holographic duality. Furthermore, for random TNs the RT formula holds in generic spaces with not necessarily hyperbolic geometry, hinging toward an extension of holographic duality beyond AdS, to more general configurations in quantum gravity. Nonetheless, at least in three dimensions, random tensor networks have been related to the gravitational action, by means of the Regge calculus \cite{Han:17}.

On the other hand, since geometry emerges as a specification of the entanglement structure, one may consider that the Einstein equations should be connected as well to the dynamics of entanglement. For small perturbations around the ground state of a CFT in boundary, linearized Einstein equations have been derived from the RT formula \cite{Lash:14,Swin:14}. Indeed, the conformal symmetry enables a relations between the energy momentum and the entanglement entropy, and consequently the area of the extremal surface c an be connected to the energy-momentum distribution at the boundary --- this is equivalent to the linearized Einstein equations.

The dynamics on the boundary, on the other side, shows a chaotic behaviour, with scrambling of the single-particle operators, which evolve into multi-particle operators \cite{Shen:14}. Maximal chaotic behaviour recovered in the ladder operators commutator growth, is encoded in the out-of-time-ordered correlation (OTOC) functions, characterized by exponential growth in time and temperature. A model endowed with this properties is e.g. the Sachdev-Ye-Kitaev model, developed to describe certain systems in condensed matter physics, such as Gapless spin-fluid \cite{Sach:93,Ki:15,Mal:16}. On the other hand, operator scrambling is also related to QEEC: the chaotic dynamics at the boundary instantiates QECC preserving quantum information, efficiently hidden (and protected) behind the horizon. Nevertheless this has led to may questions, concerning the information behind the horizon being eventually accessible from the boundary though non-local measurements, the fate of the local degrees of freedom hitting the singularity, the relation among the causal structure of the bulk and a smooth geometry across the horizon.

\subsection{General results}

We can move to prove a general result:

\begin{theorem} \label{main-thm}
A system $A$ exhibits non-trivial control flow if, and only if, its control flow can be represented by a TN.
\end{theorem}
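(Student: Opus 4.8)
The plan is to prove the biconditional by routing both directions through the QRF/TQFT representation of \S\ref{qrf}--\S\ref{tqft}, using the elementary fact that a cobordism $\mathcal{H}_X \rightarrow \mathcal{H}_Y$ is literally an element of $\mathcal{H}_X^{*} \otimes \mathcal{H}_Y$, i.e. a tensor, and that composition of cobordisms is contraction on shared tensor factors. First I would fix the meaning of ``non-trivial control flow'': $A$ deploys at least two perception--action capabilities, realized as QRFs $Q_1, \dots, Q_n$ with $n \geq 2$, that it cannot co-deploy -- either because they are functionally incompatible or, by the energetic-barrier argument of \S\ref{attractor}, too costly to run simultaneously -- together with a control QRF $C$ that reads a context sector of $\mathscr{B}$ and selects among them. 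Because the GM is over discrete states, the object assigning, for each context outcome, an amplitude (or transition probability) to each $Q_i$ is a genuine tensor; call it the control operator $\Phi$ of $A$. The degenerate case $n = 1$ is exactly ``trivial'' control flow, and corresponds to a single tensor with no contractions.

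For the forward direction I would argue as follows. The $Q_i$ are deployed by mutually separable (decoherent) ``observers'' inside $A$, so by the commutator bound $[Q_i, Q_j] \geq h/2$ of \S\ref{tqft} their sectors $dom(Q_i)$ on $\mathscr{B}$ are pairwise disjoint up to the shared reference/channel qubits; equivalently, the data they read off the MB are conditionally independent given the context. Each $Q_i$ is, by Theorem \ref{thm1}, a TQFT, hence a linear map between boundary Hilbert spaces, hence a tensor $T_i$; likewise $C$ and the LOCC-style channels of Diagram \eqref{locc-diag} carrying control messages are tensors. I would then show that this conditional independence is precisely the condition under which $\Phi$ can be written as the contraction of $\{T_i\}$, $T_C$, and the channel tensors glued only along their shared legs, rather than as a single irreducible tensor -- so that $\Phi$ is a TN whose underlying graph is the control-flow factor graph of \S\ref{attractor}. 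Non-triviality ($n \geq 2$) then guarantees at least two nodes and one contraction, hence a non-trivial TN.

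For the converse I would run the argument backwards. If $\Phi$ is represented by a non-trivial TN, it factors into at least two component tensors sharing at least one contracted index. Reading each node through the cobordism--QRF correspondence of Diagrams \eqref{CCCD-to-Cob}--\eqref{CCCD-to-Cob-2} and Theorem \ref{thm1}, each node is a distinct QRF acting on a sector of $\mathscr{B}$; two distinct nodes are two distinct QRFs, which by $[Q_i, Q_j] \geq h/2$ cannot be co-deployed and whose selection must therefore be arbitrated by the remaining structure of the network -- i.e. $A$ must switch between them as a function of context, which is exactly non-trivial control flow. Matching ``non-trivial'' on the two sides of the biconditional (so that the one-node TN is excluded) closes the equivalence.

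I expect the crux, and the step I would spend the most care on, to be the equivalence used in the middle of the forward direction: that the tensor-factorization condition for $\Phi$ coincides with the conditional-independence (quantum: decoherence) condition that distinguishes the $Q_i$ as separate ``objects'' on the MB. This is the claim flagged in the Introduction; it is what upgrades a mere list of QRF tensors into a bona fide network with non-trivial topology, and it also supplies the identification -- needed in \S\ref{applications} -- of TN bonds with shared MB degrees of freedom. Everything else is bookkeeping on top of Theorem \ref{thm1}, the commutator bound, and the tensorial nature of cobordisms.
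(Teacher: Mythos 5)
Your argument is sound in outline but takes a genuinely different route from the paper's. The paper's own proof of Theorem \ref{main-thm} never touches the QRF/TQFT machinery: it argues purely formally that (i) a TN is a contraction $T = \dots T_{ij}T_{jk}T_{kl}\dots$ of component tensors that in general do not commute ($T_{jk}\neq T_{kj}$), hence represents non-trivial control flow; and (ii) any non-trivial control flow at a fixed scale is an ordered, linear sequence of (probabilistic) operators, whose ordering can be encoded by adjoining ``spatial'' indices so that the sequence becomes a contraction, i.e.\ a flat TN, with hierarchy obtained by iterating the construction across scales. What you buy by routing through Theorem \ref{thm1}, the commutator bound, and the cobordism-as-tensor observation is a physically grounded version of the same equivalence, in which the TN bonds acquire a concrete interpretation as shared degrees of freedom on $\mathscr{B}$ and the network topology is literally the control-flow factor graph; the cost is that your ``crux'' step --- factorizability of the control operator coincides with conditional independence (decoherence) of the sectors $dom(Q_i)$ --- is precisely what the paper states as Corollary \ref{decoherence-corrol} and \emph{derives from} Theorem \ref{main-thm}, rather than using to prove it, so you would need to establish it independently rather than merely flag it. Two smaller points to watch: your converse needs the paper's preliminary remark that mutually commuting QRFs collapse into a single QRF (otherwise a multi-node TN of commuting tensors would be a counterexample to ``two nodes $\Rightarrow$ non-trivial flow''), and the bound $[Q_i,Q_j]\geq h/2$ is asserted only for QRFs deployed by \emph{separable} observers, so invoking it for arbitrary TN nodes presupposes part of what you are trying to conclude. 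Neither issue is fatal --- the paper's own proof is no more rigorous on the corresponding points --- but they mark where your plan, if executed, would need the most care.
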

\noindent
and examine some of its corollaries.  We begin by defining:

\begin{definition}
Control flow is {\em trivial} if a system deploys only one QRF.
\end{definition}
\noindent
As any collection of mutually-commuting QRFs can be represented as a single QRF \cite{fg:21, fgm:22a}, any system that deploys only mutually-commuting QRFs exhibits trivial control flow.

Systems that deploy only a single QRF ``do the same thing'' regardless of context, and so do not qualify as ``interesting'' in the sense used here.  As noted above, no finite physical system can measure the entire state of its boundary with a single QRF, so no such system can simultaneously measure and act on its entire context.  Any system $A$ that deploys multiple QRFs in sequence cannot, as noted above, avoid contextuality due to unobservable effects, mediated by the action of $H_B$, of the action of $Q_i$ on the state measured by $Q_j$.  Every action taken by an ``interesting'' system, in other words, at least transiently increases the VFE at its boundary.

Consider, then, a system $A$ that deploys multiple, distinct QRFs $Q_1, Q_2, \dots Q_n$, where $n \ll N = dim(H_{AB})$.  Classical control flow in $A$ can then be represented by a matrix $\mathbf{CF} = [P_{ij}]$, where $P_{ij}$ is the probability of the control transition $Q_i \rightarrow Q_j$.  As noted earlier, any such transition has an energetic cost, which must be paid with free energy sources from $F$.

The matrix $\mathbf{CF}$ is a 2-tensor.  Theorem \ref{main-thm} states that this tensor can be decomposed into a TN.  We prove it as follows:

\begin{proof}[Proof (Thm. \ref{main-thm})]
Suppose first that control flow in a system $A$ can be represented by a TN.  A TN is, by definition, a factorization of a tensor operator into a network of tensor operators.  This network can be either hierarchical or flat; if it is hierarchical, each layer can be considered a flat TN.  Hence no generality is lost in considering just the case of a flat TN, which is an operator contraction $T = \dots T_{ij} T_{jk} T_{kl} \dots$, where summation on shared indices is left implicit.   In general, $T_{jk} \neq T^T_{jk} = T_{kj}$, hence these expressions do not commute.  They therefore represent non-trivial control flow.  Conversely, any non-trivial control flow can be written, at any fixed scale or level of abstraction, as a linear sequence of (in general probabilistic) operators.  The fixed order of operators in the sequence can be encoded formally by adding ``spatial'' indices as needed to allow contraction over shared indices.  Hence any non-trivial control flow at a fixed scale can be written as a flat TN.  This construction can be repeated at each larger scale to produce a hierarchical TN over a collection of ``lowest-scale'' TNs.
\end{proof}

We can now examine two corollaries of this result:

\begin{corollary} \label{decoherence-corrol}
Decoherent reference sectors exist on $\mathscr{B}$ if and only if control flow can be implemented by a TN.
\end{corollary}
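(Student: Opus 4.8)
The plan is to reduce Corollary \ref{decoherence-corrol} to Theorem \ref{main-thm} by way of the dictionary set up in \S\ref{qrf} between sectors on $\mathscr{B}$ and the QRFs that act on them. First I would fix the reading of ``decoherent reference sectors exist on $\mathscr{B}$'': it should mean that $\mathscr{B}$ admits a decomposition into at least two mutually decoherent sectors, each the domain $dom(Q)$ of a reference QRF that $A$ actually deploys. This existence claim has content precisely when $A$ carves $\mathscr{B}$ into more than one such piece: a system deploying a single QRF -- or a family of mutually commuting QRFs, which by \cite{fg:21, fgm:22a} collapse to one -- treats the whole relevant region as a single reference sector and so witnesses no such splitting. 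Unpacked, then, the left-hand side asserts exactly that $A$ deploys at least two QRFs $Q_i, Q_j$ whose domains are mutually decoherent sectors, hence (by \cite{fgm:23}) with $[Q_i, Q_j] \geq h/2$, so that $Q_i$ and $Q_j$ fail to commute.

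Given this, the forward direction is immediate: two non-commuting deployed QRFs means $A$ does not deploy only a single QRF (nor only mutually commuting ones, which would collapse to one), so its control flow is non-trivial and Theorem \ref{main-thm} furnishes a TN representing it. For the converse I would run the proof of Theorem \ref{main-thm} in reverse. A TN implementing control flow in $A$ is, after flattening each hierarchical layer, a contraction $\dots T_{ij} T_{jk} T_{kl} \dots$ of tensor operators with $T_{jk} \neq T_{kj}$ in general; the flow is therefore genuinely non-commutative, so $A$ deploys at least two non-commuting QRFs $Q_i, Q_j$. Their domains $dom(Q_i)$ and $dom(Q_j)$ are then distinct, mutually decoherent sectors of $\mathscr{B}$, and the ``reference'' qualifier is recovered by restricting, without loss of generality, to the time-invariant $R$-components of the identified systems, since re-identifying a sector across the ordered sequence of operations encoded by the TN requires a stable reference (cf. Diagrams \eqref{flow-1} and \eqref{flow-2}). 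Hence decoherent reference sectors exist on $\mathscr{B}$.

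The step I expect to be the main obstacle is not the logical skeleton, which is a one-line appeal to Theorem \ref{main-thm} in each direction, but the bookkeeping needed to guarantee that the two witnessing QRFs can always be taken to define \emph{distinct} reference sectors, rather than, say, two non-commuting \emph{pointer} QRFs $P$ and $Q$ sharing a single reference sector $R$, as in the alternating $s_z$/$s_x$ measurement of Diagram \eqref{flow-2}. Handling this requires the observation that a change of pointer QRF is itself a change in how $A$ carves $\mathscr{B}$, so that the post-switch configuration is an identified system with its own reference component; together with the collapse of commuting families to a single QRF, this ensures that any non-trivial flow exhibits a non-commuting pair of \emph{sector-defining} QRFs, which is what ``decoherent reference sectors'' refers to. Once that is in place, the corollary is precisely the ``distinct objects/features'' reading announced in the Introduction -- that the conditions for factoring the control tensor into a TN coincide with the conditions for partitioning $\mathscr{B}$ into mutually decoherent pieces -- made precise.
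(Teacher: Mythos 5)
Your proposal is correct and follows essentially the same route as the paper: both directions reduce to Theorem \ref{main-thm} via the identification of decoherent sectors on $\mathscr{B}$ with independently-deployable, non-commuting QRFs, the forward direction reading decoherence as requiring a factoring control structure and the converse reading the TN factorization as rendering the QRFs independently deployable and hence their sectors decoherent. The extra bookkeeping you supply about distinguishing reference sectors from shared-$R$ pointer switches is a refinement the paper's own two-line proof silently elides, but it does not change the argument.
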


\begin{proof}
Decoherence sectors require independently-deployable, non-commuting QRFs.  This requires a control structure that factors, by Theorem \ref{main-thm} a TN.  Conversely, a TN factors the control structure, making QRFs independently deployable, which renders their sectors decoherent.
\end{proof}

Equivalently, the GM factors if, and only if, control flow can be implemented by a TN.

\begin{corollary} \label{cycle-corrol}
The TN of any system compliant with the FEP is a decomposition of the Identity.
\end{corollary}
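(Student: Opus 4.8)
The plan is to read ``decomposition of the Identity'' as the statement that the control-flow TN, when its ordering indices are contracted all the way around one complete cycle of active inference, composes to the identity operator on the reference sectors of $\mathscr{B}$ (and, in the classical limit, fixes the NESS density). The whole argument hangs on the defining property of an FEP-compliant system: it has a NESS and acts so as to keep its particular state $\pi(t)$ in the vicinity of that NESS, i.e., it is self-evidencing and remains re-identifiable over time.

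First I would recall two facts already in hand. Active inference is intrinsically cyclic: each cycle is a perception leg (measure-and-record) followed by an action leg (read-and-prepare). And, by Corollary \ref{decoherence-corrol}, the existence of the control-flow TN is equivalent to the existence of decoherent reference sectors on $\mathscr{B}$; each such sector $R$ carries, by definition, a time-invariant state $|R\rangle$ (density $\rho_R$), and it is precisely what the system uses to re-identify its context from one cycle to the next. In the TQFT picture of \S\ref{tqft} the composite $Q=(\overrightarrow{Q},\overleftarrow{Q})$ of Diagram \eqref{bipartite-ops} is a cobordism $\mathcal{H}_X\rightarrow\mathcal{H}_X$; restricted to a reference sector, time-invariance forces it to be the cylinder, i.e., the identity TQFT on $\mathcal{H}_R$. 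Since, as noted above, recording an outcome at $t$ and comparing it at $t+\Delta t$ is formally propagation of $X$ forward in time, this is just the statement that a system sitting at its NESS propagates its reference state trivially.

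Next I would assemble the full cycle. Starting from the factorization used in the proof of Theorem \ref{main-thm}, non-trivial control flow is a sequence $\dots T_{ij}T_{jk}T_{kl}\dots$ of tensor operators, with the ``spatial'' indices encoding the fixed order of the transitions $Q_i\rightarrow Q_j$. Because the process returns to its starting configuration each cycle, this sequence closes into a loop; contracting the ordering indices around the loop leaves a single TN whose only free indices are the ``same'' reference sector evaluated at the start and at the end of the cycle. By the previous paragraph each elementary leg acts as the identity on that sector, hence so does the closed TN; and since the $dom(Q_i)$ together with the free-energy sector $F$ cover $\mathscr{B}$, the corresponding sector projectors sum to $\mathbb{I}_{\mathscr{B}}$. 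Equivalently, the product of the component control-flow tensors equals the identity on the reference sectors --- so the TN is a decomposition of the identity --- which is the claim.

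I expect the main obstacle to be the word ``vicinity'': the FEP only guarantees return to the \emph{neighbourhood} of the NESS, not exact periodicity, so the closed TN is literally the identity only on the time-invariant reference sectors, and only on average --- or in the classical limit $\mathcal{S}(|A\rangle),\mathcal{S}(|B\rangle)\rightarrow 0$ --- on the pointer sectors, which by design change within a cycle and on which every action transiently raises VFE. Handling this cleanly means stating the corollary on the reference sub-Hilbert-space, where it is exact, and noting that its classical shadow is the weaker but more familiar statement that the row-stochastic matrix $\mathbf{CF}$ fixes the NESS distribution over QRFs --- the cyclic composite acts as the identity on that distribution --- which is precisely the long-time, coarse-grained version of ``maintaining allostasis.''
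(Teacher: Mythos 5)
Your argument follows the same route as the paper's: FEP compliance implies a NESS to which the system returns after perturbation, so at a sufficiently coarse scale the control-flow TN closes into a cycle, which is read as a decomposition of the Identity. The paper's own proof is essentially just that two-sentence observation; your elaboration via the cylinder cobordism on the time-invariant reference sectors, and your explicit handling of the ``vicinity'' caveat (exactness only on the reference sub-Hilbert-space, with the classical shadow being that $\mathbf{CF}$ fixes the NESS distribution), fills in details the paper leaves implicit and does so consistently with its framework.
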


\begin{proof}
The FEP applies to systems with a NESS, and drives such systems to return to (the vicinity of) the NESS after any perturbation.  Hence at a sufficiently large scale, the TN of any such system is a cycle, i.e., a decomposition of the Identity.
\end{proof}

Many standard TN models, e.g., MERAs, assume boundary conditions asymptotically far, in numbers of lowest-scale operators, from the region of the network that is of interest.  Identifying such asymptotic boundary conditions yields a cyclic system.

Theorem \ref{main-thm}, together with its corollaries, provides a natural, formal means of classifying systems by their control architectures.  At a high level, two characteristics distinguish systems with different architectures:

\begin{itemize}
\item Hierarchical depth indicates the number of ``virtual machine'' layers \cite{smith:05} the architecture supports.  The interfaces between these layers implement coarse-graining, removing from the higher-level representation all dimensions, and hence all information, which is contracted out of the lower-level operators.
\item Number and location of contractions that yield unitary operators, and hence build in entanglement between lower-level operators.   The natural limit is a MERA, in which every pair of lower-level operators is entangled at every hierarchical level \cite{orus:19}.  
\end{itemize} 
\noindent
The control-flow architecture, in turn, specifies the structure of the ``layout'' of distinguishable sectors on $\mathscr{B}$ and hence of detectable features/objects in the environment.  Locality on $\mathscr{B}$ requires a hierarchical TN; detectable entanglement requires a MERA-like TN.  Locality is required for detectable features/objects to appear to have components with nested decomposition.  Any QRF for geometric space, and hence for spacetime, must be hierarchical, and must be a MERA if entanglement in space is to be detected.  A MERA is required, in particular, if the use of coherence between spatially-separated systems as a computational or communication resource is detectable.

To illustrate the classification of systems by hierarchical level, consider the ten-step cyclic TN shown in Diagram \eqref{TN-1}:

\begin{equation} \label{TN-1}
\begin{gathered}
\begin{tikzpicture}
\draw [thick] (0,0) circle [radius=0.3];
\draw [thick] (1,0) circle [radius=0.3];
\draw [thick] (3,0) circle [radius=0.3];
\draw [ultra thick] (0.3,0) -- (0.7,0);
\draw [ultra thick] (1.3,0) -- (1.7,0);
\node at (2, 0) {$\dots$};
\draw [ultra thick] (2.3,0) -- (2.7,0);
\node at (0,0) {$A$};
\node at (1,0) {$B$};
\node at (3,0) {$J$};
\draw [ultra thick] (-0.3,0) to [out=170,in=180] (0,0.8) to [out=0,in=180] (1.5,0.8) to [out=0,in=180] (3,0.8) to [out=0,in=10] (3.3,0);
\end{tikzpicture}
\end{gathered}
\end{equation}
\noindent
and its extension to a hierarchy as shown in Diagram \eqref{TN-2}:

\begin{equation} \label{TN-2}
\begin{gathered}
\begin{tikzpicture}
\draw [thick] (0,0) circle [radius=0.3];
\node at (0,0) {$A$};
\draw [ultra thick] (0.3,0) -- (0.7,0);
\draw [thick] (1,0) circle [radius=0.3];
\node at (1,0) {$B$};
\draw [ultra thick] (1.3,0) -- (1.7,0);
\draw [thick] (2,0) circle [radius=0.3];
\node at (2,0) {$C$};
\draw [ultra thick] (2.3,0) -- (2.7,0);
\draw [thick] (3,0) circle [radius=0.3];
\node at (3,0) {$D$};
\draw [ultra thick] (3.3,0) -- (3.7,0);
\draw [thick] (4,0) circle [radius=0.3];
\node at (4,0) {$E$};
\draw [ultra thick] (4.3,0) -- (4.7,0);
\draw [thick] (5,0) circle [radius=0.3];
\node at (5,0) {$F$};
\draw [ultra thick] (5.3,0) -- (5.7,0);
\draw [thick] (6,0) circle [radius=0.3];
\node at (6,0) {$G$};
\draw [ultra thick] (6.3,0) -- (6.7,0);
\draw [thick] (7,0) circle [radius=0.3];
\node at (7,0) {$H$};
\draw [ultra thick] (7.3,0) -- (7.7,0);
\draw [thick] (8,0) circle [radius=0.3];
\node at (8,0) {$I$};
\draw [ultra thick] (8.3,0) -- (8.7,0);
\draw [thick] (9,0) circle [radius=0.3];
\node at (9,0) {$J$};
\draw [thick, red] (-1,1) circle [radius=0.3];
\draw [thick, red] (1,1) circle [radius=0.3];
\draw [thick, blue] (2,1) circle [radius=0.3];
\draw [thick, red] (3,1) circle [radius=0.3];
\draw [thick, blue] (4,1) circle [radius=0.3];
\draw [thick, red] (5,1) circle [radius=0.3];
\draw [thick, blue] (6,1) circle [radius=0.3];
\draw [thick, red] (7,1) circle [radius=0.3];
\draw [thick, blue] (8,1) circle [radius=0.3];
\draw [thick, blue] (10,1) circle [radius=0.3];
\draw [ultra thick, red] (0.2,0.2) -- (0.8,0.8);
\draw [ultra thick, blue] (1.2,0.2) -- (1.8,0.8);
\draw [ultra thick, red] (2.2,0.2) -- (2.8,0.8);
\draw [ultra thick, blue] (3.2,0.2) -- (3.8,0.8);
\draw [ultra thick, red] (4.2,0.2) -- (4.8,0.8);
\draw [ultra thick, blue] (5.2,0.2) -- (5.8,0.8);
\draw [ultra thick, red] (6.2,0.2) -- (6.8,0.8);
\draw [ultra thick, blue] (7.2,0.2) -- (7.8,0.8);
\draw [ultra thick, red] (1.2,0.8) -- (1.8,0.2);
\draw [ultra thick, blue] (2.2,0.8) -- (2.8,0.2);
\draw [ultra thick, red] (3.2,0.8) -- (3.8,0.2);
\draw [ultra thick, blue] (4.2,0.8) -- (4.8,0.2);
\draw [ultra thick, red] (5.2,0.8) -- (5.8,0.2);
\draw [ultra thick, blue] (6.2,0.8) -- (6.8,0.2);
\draw [ultra thick, red] (7.2,0.8) -- (7.8,0.2);
\draw [ultra thick, blue] (8.2,0.8) -- (8.8,0.2);
\draw [ultra thick] (-0.3,0) to [out=170,in=180] (0,-1.2) to [out=0,in=180] (4,-1.2) to [out=0,in=180] (9,-1.2) to [out=0,in=10] (9.3,0);
\draw [ultra thick, red] (-0.2,0.2) -- (-0.8,0.8);
\draw [ultra thick, blue] (9.2,0.2) -- (9.8,0.8);
\draw [ultra thick, red] (-1,0.7) to [out=-90,in=180] (-0.6,-0.6);
\draw [ultra thick, red] (-0.4,-0.6) to [out=0,in=180] (7.5,-0.6) to [out=0,in=-90] (8,-0.3);
\draw [ultra thick, blue] (1,-0.3) -- (1,-0.5);
\draw [ultra thick, blue] (1,-0.7) to [out=-90,in=180] (1.3,-0.9) to [out=0,in=180] (9.3,-0.9);
\draw [ultra thick, blue] (9.5,-0.9) to [out=0,in=-90] (10,0.7);
\draw [ultra thick, green] (2,2) circle [radius=0.3];
\draw [ultra thick, green] (1.2,1.2) -- (1.8,1.8);
\draw [ultra thick, green] (2.2,1.8) -- (2.9,0.25);
\draw [ultra thick, green] (5,2) circle [radius=0.3];
\draw [ultra thick, green] (4.2,1.2) -- (4.8,1.8);
\draw [ultra thick, green] (5.2,1.8) -- (5.8,1.2);
\draw [ultra thick, green] (8,2) circle [radius=0.3];
\draw [ultra thick, green] (8,1.3) -- (8,1.7);
\draw [ultra thick, green] (8,2.3) to [out=90,in=0] (7.5,2.7) to [out=180,in=0] (0.8,2.7) to [out=180,in=90] (0,0.3);
\end{tikzpicture}
\end{gathered}
\end{equation}
\noindent
where red, blue, and green colors indicate distinct hierarchical ``layers'' of tensor contractions.  We have trained artificial neural networks (ANNs) to execute these TNs as the sequences of state transitions shown in Table 1.  The first sequence (Dataset 1) is a ten-step cycle shown Diagram \eqref{TN-1}; the second sequence (Dataset 2) layers the coarse-grained state transitions of Diagram \eqref{TN-2} onto this ten-step cycle.  In Dataset 2, a two-bit tag is used to differentiate the ``low-level'' from the coarse-grained ``high-level'' cycles.  An example state state transition from a randomly-generated initial state is shown in Fig. \ref{transition}; the red-on-green bit pattern effectively moves ``up'' one step on each state-transition cycle.

\begin{figure}[H]
\centering
\includegraphics[width=13 cm]{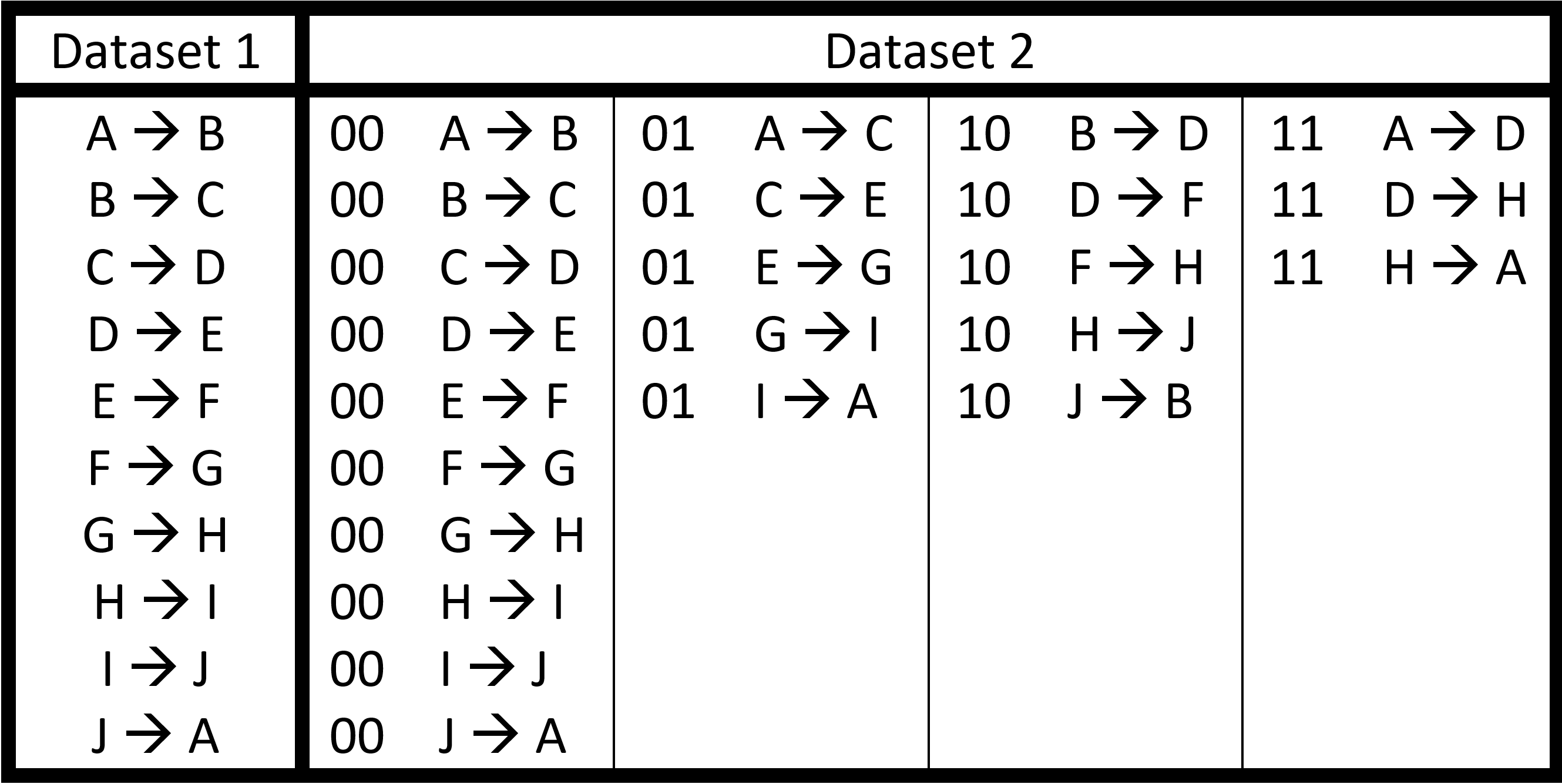}
\label{table-1}
\end{figure}
\begin{quote}
Table 1: Datasets used in ANN simulations.  Dataset 1 specifies a ten-step cycle A $\rightarrow$ B $\rightarrow ~\dots ~\rightarrow J \rightarrow A$.  Dataset 2 specifies this same cycle, with three coarse-grained cycles layered on top.  The tags (0,0), (0,1), (1,0), and (1,1) distinguish the data for the low- and high-level cycles.
\end{quote}
~\\
\begin{figure}[H]
\centering
\includegraphics[width=11 cm]{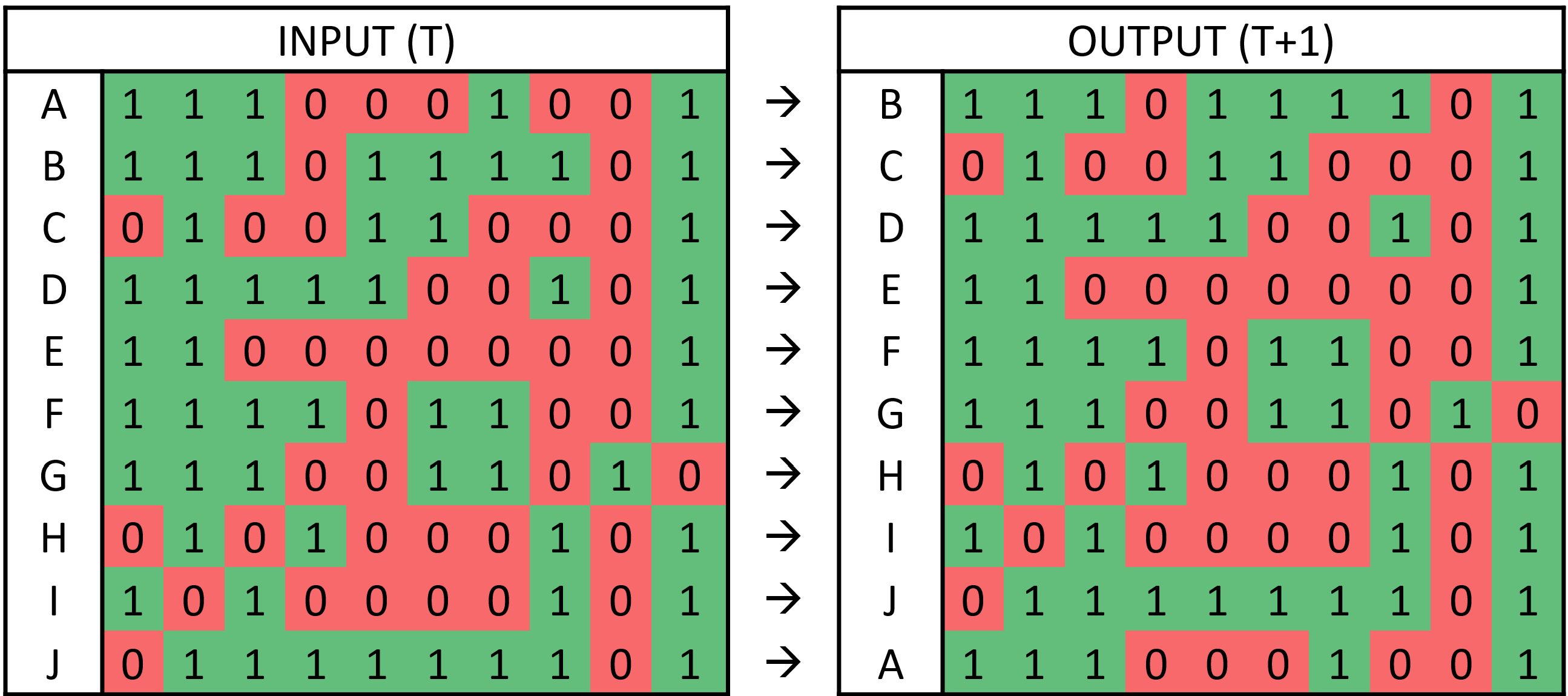}
\caption{Example state transition from Dataset 1.}
\label{transition}
\end{figure}

We trained two ANNs, one to execute each of the control cycles shown in Table 1.  The networks are each composed of three layers, as illustrated in Fig. \ref{ff-arch}, with network sizes of [10, 50, 10] and [10, 200, 10], respectively, for the input, hidden, and output layers.  The units in the hidden layer use the rectified linear unit (ReLU) nonlinear activation function and the neurons in the output layer use the hyperbolic tangent activation function. The network is connected in a feedforward way where a neuron in one layer connects to every neuron in the next layer. Since the ANN serves as a switch state controller, we use a training scheme, similar to one-class classification \cite{manevitz:02}, where the training data are the only data that the network learns to produce. In so doing, the network learns to overfit the training data, and any input outside of the designated state-encoding is discarded.  The network is, therefore, not expected to deviate from the learned pattern.  The network learns both control regimes wth 100\% accuracy after training with 3,000 randomly-generated 10-bit inputs. 

\begin{figure}[H]
\centering
\includegraphics[width=11 cm]{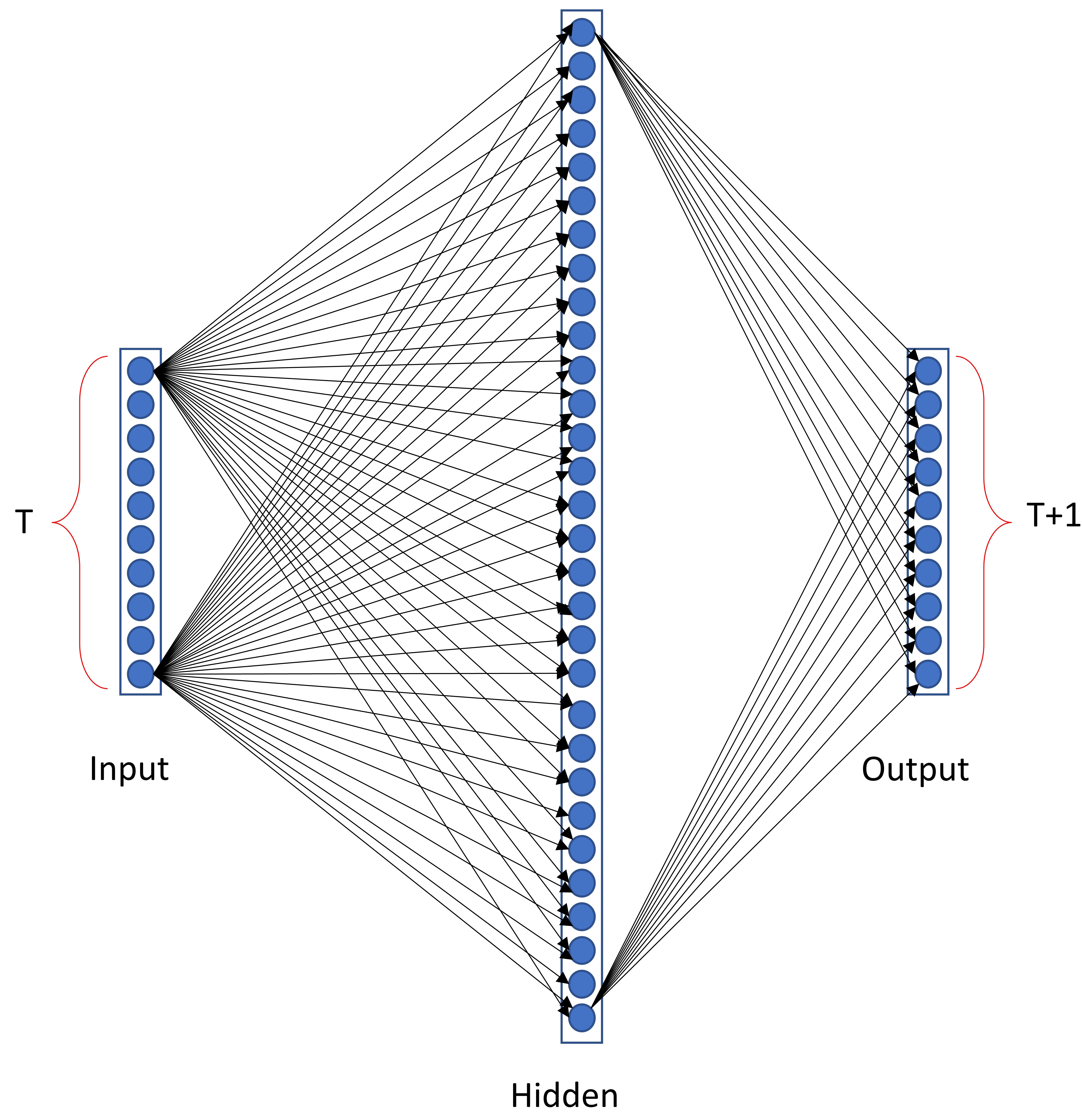}
\caption{Feed-forward network archtecture used to learn the control cycles specified in Table 1.  Each node is connected to every node of the next layer, as shown here for the first and last nodes only.  The labels `T' and `T+1' indicate time steps in the executed control flow. }
\label{ff-arch}
\end{figure}

In the more realistic case of noisy input data, where binary states can be flipped, the Bidirectional Associative Memory (BAM), a minimal two-layer nonlinear feedback network \cite{kosko:88}, is a viable alternative to a shallow feed-forward ANN. The architecure is shown in Fig. \ref{bam-arch}.  This BAM network learns to associate between the two initial and final states in Table 1, with similar performance to that of the feed-forward network.

\begin{figure}[H]
\centering
\includegraphics[width=11 cm]{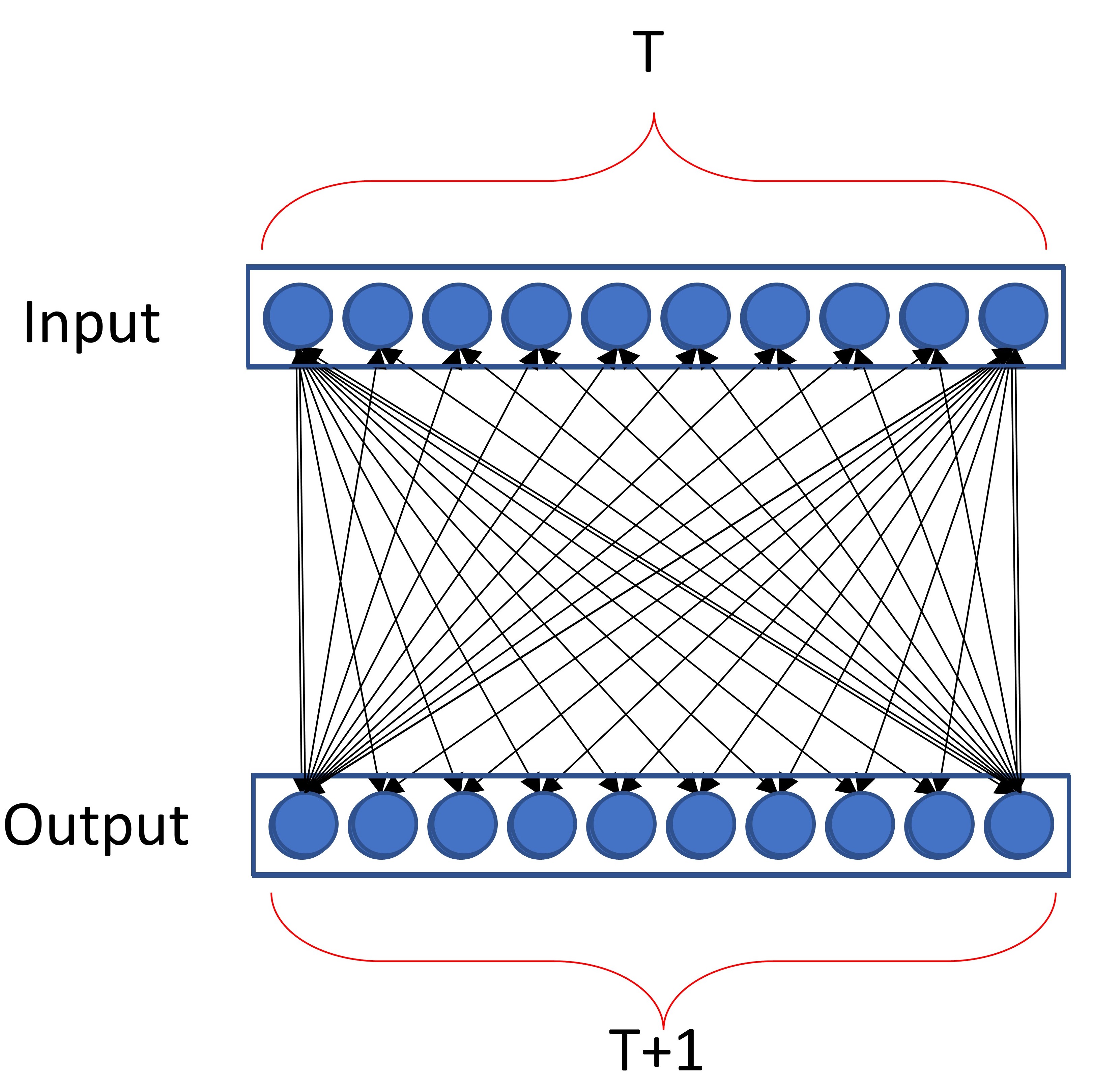}
\caption{Architecture of the Bidirectional Associative Memory (BAM) network employed here. As in Fig. \ref{ff-arch}, only the connections of the first and last nodes are shown explicitly.}
\label{bam-arch}
\end{figure}

\section{Implementing control flow with TQNNs} \label{tqnn}

%
%
%

Tensor Networks can be naturally associated to the matrix elements of physical scalar products among topological quantum neural networks (TQNNs). Physical scalar products encode indeed the dynamics of TQFTs, since they fulfill their constraints of imposing flatness of the curvature and gauge invariance. Thus, the matrix elements associated to scalar products can be seen as evolution matrix elements for the spin-network states that span the Hilbert spaces of TQNNs.

\subsection{Tensor networks as classifiers for TQNNs } \label{TQNNs}

A notable example is provided by BF theories \cite{Ba:96}, a class of TQFTs particularly well studied in the literature of mathematical physics that enables expressing effective theories of particle physics, gravity and condensed matter, and provides as well a general framework for implementations of models of quantum information and quantum computation, machine learning (ML) and neuroscience. These are defined on the principal bundle $M$ of a connection $A$ for some internal gauge group $G$, with algebra $\mathfrak{g}$, according to the action on a \emph{d}-dimensional manifold $\mathcal{M}_d$
\begin{equation}
\mathcal{S}=\int_{\mathcal{M}_d} {\rm Tr} [B \wedge F]\,,
\end{equation}
where $B$ is an ad$(\mathfrak{g})$-valued  \emph{d-2}-form, $F$ denotes the field-strength of $A$, which is a \emph{2}-form, and the trace Tr is over the internal indices of $\mathfrak{g}$, ensuring gauge invariance of the density Lagrangian $\mathcal{L}={\rm Tr} [B \wedge F]$ of the BF theory.

Variation with respect to the conjugated variables, the connection $A$ and the $B$ frame-field, closing a canonical symplectic structure, provide the equations of motion of the theory \cite{Ba:96}:
\begin{equation}
F=0\,, \qquad d_AB=0\,,
\end{equation}   
respectively the curvature constraint, imposing the flatness of the connection, and the Gau\ss\, constraint, imposing invariance under gauge transformations, having denoted with $d_A$ the covariant derivative with respect to the connection $A$.

At the quantum level, the states of the kinematical Hilbert space of the theory, fulfilling by construction the Gau\ss\, constraint, can be represented in terms of cylindrical functionals $Cyl$, supported on graphs $\Gamma$ that are unions of segments $\gamma_i$, the end points of which meet in nodes $n$, and with holonomies -- elements of the group $G$ -- $H_{\gamma_i}[A]$ of the connection $A$ assigned to $\gamma_i$ and intertwiner operators -- invariant tensor product of representations -- $v_n$ assigned to the nodes $n$.

For $G =$ SU(2), spin-networks $|\Gamma, j_\gamma, \iota_n \rangle$, supported on $\Gamma$ and labelled by the spin $j_\gamma$ of the irreducible representations of the group elements assigned to $\gamma$ and by the quantum intertwiner numbers $\iota_n$ associated to $v_n$, represent a basis of the kinematical Hilbert space of the theory. In terms of functionals of $Cyl$, one can provide the holonomy representation, which is related to the ``spin and intertwiner'' representation of $|\Gamma, j_\gamma, \iota_n \rangle$ by means of the Peter-Weyl transform. This allows us to decompose the spin-network cylindric functional as \cite{Ro:04}:
\begin{equation}
\Psi_{j_{\gamma_{ij}},\iota_{n_i} }(h_{\gamma_{ij}})=\left(\bigotimes_n \iota_n \right)\cdot \left(\bigotimes_{\gamma_{ij}} D^{(j_{\gamma_{ij}})} (h_{\gamma_{ij}})\right)
\,,
\end{equation}
with $D^{(j)}$ are Wigner matrices providing representation matrices of the SU$(2)$ group elements.

The functorial evolution among spin-networks is ensured by the projector operator \cite{fgm:22}, which implements the curvature constraint in the physical scalar product among states, i.e.
\begin{equation}
\langle {\rm in} |  P \,  |{\rm out} \rangle\,, \qquad {\rm with} \qquad P= \int \mathcal{D} [N] \exp(\imath\int {\rm Tr}[ N F] ) \,.
\end{equation}
We may then regard $| {\rm in} \rangle $ as elements of the Hilbert space, and without loss of generality pick up those ones resulting from composing tensorially in $Cyl$ \emph{k}-representations of holonomies. We may further denote them as $|j_1 \dots j_k \rangle $, with some ordering prescription to associate the topological structure of $\Gamma$ to the sequence of spin labels. Physically evolving states $P | {\rm in} \rangle $ are distinguished from the former ones by labelling them as  $|\widetilde{j_1 \cdots j_k} \rangle $. Similarly, we introduce $| {\rm out} \rangle $ as the tensor product of  (\emph{n-k})-representations of holonomies, and denote these states as  $|i_1 \dots i_{n-k} \rangle $. Then the matrix elements of $\langle {\rm in} |  P \, | {\rm out} \rangle$ naturally give rise \cite{fgm:23} to an \emph{n}-tensor, i.e.
\begin{equation}
\langle i_1 \dots i_{n-k} |\widetilde{j_1 \cdots j_k} \rangle =T_{i_1\dots i_{n-k} j_1 \dots j_k}\,.
\end{equation}

\subsection{Geometric RG flow for TQNNs and TNs}

The mathematical structures of TQNNs we summarized in Sec.~\ref{TQNNs} are picturing systems ``at equilibrium'', for which TQFTs characterize a topological stability that percolates into the related transition amplitudes. Nonetheless, it is worth considering as well how stochastic noise might interfere with the topological order ensured by TQFTs, and study the role of ``out-of-equilibrium'' physics in the analysis of the evolution of the systems under scrutiny.

Out-of-equilibrium dynamics is instantiated considering a heat-flow evolution of the fundamental fields of the theory, with respect to a thermal time $\tau$.  Typical Langevin equations, complemented with stochastic noise, provide through their convergence toward the equations of motion of the theory the relaxation toward equilibrium of the field configurations representing specific systems \cite{Pa:81}. In general, given some fields $\phi_\sigma$, with a classical equation of motion derived, according to the variational principle ${\delta \mathcal{S}}/{\delta \phi_\sigma}$, from an action $\mathcal{S}$ over a Euclidean manifold $\mathcal{M}$, the associated Langevin equations read:
\begin{equation}
\frac{\partial}{\partial \tau} \phi_\sigma = -\frac{\delta \mathcal{S}}{\delta \phi_\sigma} 
+\eta_\sigma   \,, 
\end{equation}
with $\eta_\sigma $ a stochastic noise term. The theory at equilibrium is characterized by the symmetries of the equations of motion ${\delta \mathcal{S}}/{\delta \phi_\sigma}=0$ that are broken in the transient phase \cite{Lulli:2021bme}; these symmetries are consistent with -- and in the case of BF theories, actually generated by -- the theories at equilibrium.

A prototype of geometric heat-flow was introduced by Hamilton, and then used by Perelman to prove the Poincar\'e conjecture, which goes under the name of Ricci flow. Here the gravitational field $g_{\mu \nu}$ is the basic configurational space field, while the drift terms are the Einstein equations of motion in the vacuum, which indeed are expressed by requiring that the components of the Ricci tensor vanish, i.e. $R_{\mu \nu}=0$. 
The Ricci flow then reads 
\begin{equation}
\imath \frac{\partial}{\partial \tau} g_{\mu \nu}= -2 R_{\mu \nu}\,,
\end{equation}   
having considered now a Lorentzian manifold $\mathcal{M}$. The Ricci flow equations can be further complemented introducing the Ricci target $R^{T}_{\mu \nu}=\kappa^2 (T_{\mu\nu}-1/2 g_{\mu \nu }T) $, expressed in terms of the Newton constant $G=\kappa^2/(8 \pi )$ and the energy-momentum tensor of matter $T_{\mu\nu}$, so as to obtain at equilibrium the Einstein equations: 
\begin{equation}
R_{\mu\nu}-\frac{1}{2} g_{\mu\nu}R= \kappa^2 T_{\mu \nu}\,, \qquad {\rm or \, \, equivalently} \qquad R_{\mu \nu}=R^{T}_{\mu \nu} \,.
\end{equation}      
The stochastic version of the Ricci flow, with heat equation turning into a Langevin equation, has been introduced and deepened in \cite{Lulli:2021bme} for a generic gravitational system in the presence of matter fields, describing an action $\mathcal{S}$ for gravity and matter. Moving then from:
\begin{equation}
\imath \frac{\partial}{\partial \tau} g_{\mu \nu}= -\frac{1}{\kappa^2} \frac{\delta \mathcal{S}}{ \delta g^{\mu \nu}} +\eta\, g_{\mu \nu} \,,
\end{equation}   
in which a multiplicative noise $\eta_{\mu \nu}=\eta\, g_{\mu \nu}$ appears, the Hamiltonian analysis of the stochastic Ricci flow (SRF) in the Adomian decomposition method (ADM) variables has been derived \cite{Lulli:2021bme}.

An essential by-product of the discussion, from the Ricci flow perspective, is that the equilibration trajectories corresponds to those of a renormalization group (RG) flow. The thermal time $\tau$ plays the role of scale parameter that individuates a dimension in the bulk, which is out-of-equilibrium. The boundaries are recovered asymptotically in $\tau$, in the infra-red regime, and are by definition at equilibrium and thus symmetric.

For a particular class of TQFTs, the BF theories we have introduced in Sec.~\ref{TQNNs} for implementing TQNNs and TNs, the geometric RG flow acquire a specific expression as the TQFT equivalent of the gravitational Ricci flow \cite{Marciano202X}.

\subsection{TNs as a generalization of the main model architectures in ML}

The use of TNs is an emerging topic in the ML community. The integration between the two appears quite immediate. A TN structure can be viewed as an ML model in which the parameters are properly adjusted to learn the classification of a data set. Yet, as Ref.~\cite{Sen:22} mentions, machine learning can aid, in turn, in determining a factorization of a TN approximating a data set. Moreover, TNs are also used to compress the layers of ANN architectures, besides a variety of other uses. Tensor networks are becoming more and more popular to the extent that they are a powerful tool for representing and manipulating high-dimensional data, as in the case of image and video classification tasks in which the data is represented as a high-dimensional tensor. High efficiency, flexibility, and easy to use are making them a dominant choice for many AI applications. Furthermore, besides being used to represent data, TNs can be used to process data by exploiting a number of operators. This feature makes them an effective technique for processing data in ML applications.

As it is well known, TNs are particularly well suited for representing quantum many-body states in which the dimension of the Hilbert space is exponentially large in the number of particles. The corresponding ML approach consists in:

\begin{itemize}
\item
Lifting data to exponentially higher spaces;
\item
Applying any linear classifier $f(x)=W^\star \Phi(X)$ to a non-linear space;
\item
Compressing the weights by using TNs.
\end{itemize}

The output of the model is a separation of classes that would not be linearly separable in a linear space. In particular, the decision function is the overlap of the weight tensor $W$ with the feature map tensor $\Phi$ as in Fig.~\ref{F1}.  The weight tensor $W$ can be approximated by the decomposition in Fig.~\ref{F2}.

\begin{figure}[H]
\centering
\includegraphics[width=11 cm]{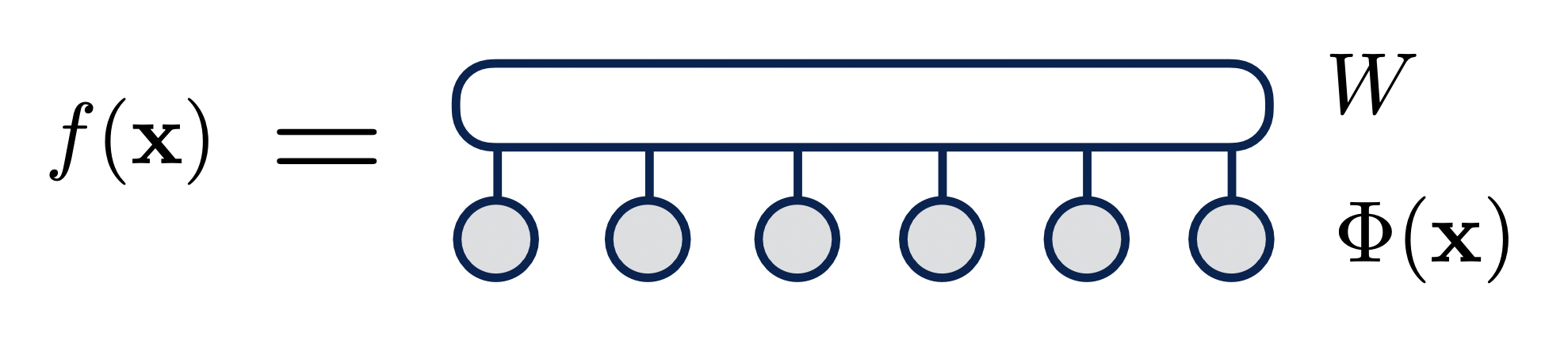}
\caption{Representation of the decision function (see \cite{Stoudenmire:18}).}
\label{F1}
\end{figure}

\begin{figure}[H]
\centering
\includegraphics[width=11 cm]{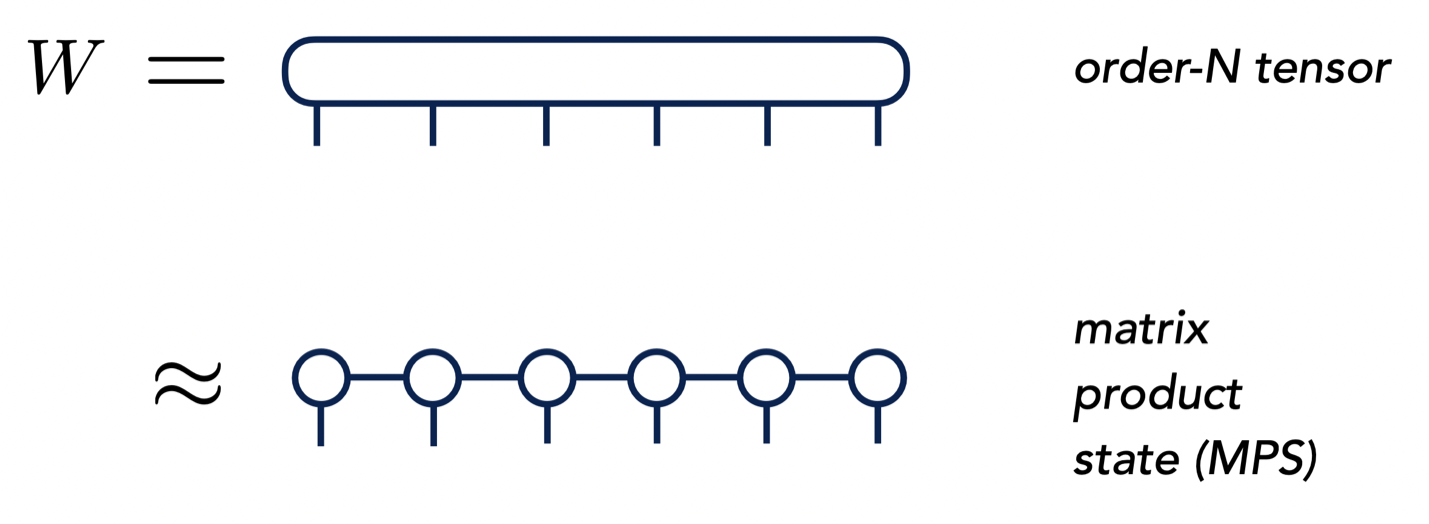}
\caption{Matrix product decomposition (again see \cite{Stoudenmire:18}).}
\label{F2}
\end{figure}

Regularization and optimization are built as a constructive product of low-order tensors while weight compression is performed by using the Matrix Product States (MPS) decomposition. If we look at Deep Neural Networks as a piecewise composition of linear discriminators (logistic regression functions), then the TN framework appears as a generalization of the main model architectures found in the ML literature, e.g. Support Vector Machines, Kernel models, and Deep Neural Networks.

The literature concerning the use of tensor theory in traditional ML is becoming large. A short review starts with a seminal paper by Stoudenmire and Schwab \cite{Stou:16}, which demonstrated how algorithms for optimizing TNs can be adapted to supervised learning tasks by using MPS (tensor trains) to parametrize non-linear kernel learning models. Novikov, Trofimov, and Oseledets \cite{No:17} have shown how an exponentially large tensor of parameters can be represented in a factorized format called Tensor Train (TT), with the consequence of obtaining a regularization of the model. van Glasser, Pancotti, and Cirac \cite{Glas:18} explored the connection between TNs and probabilistic graphical models by introducing the concept of a ``generalized tensor network architecture'' for ML. Ref.~\cite{Han:18} then designed a generative model, i.e. a traditional machine learning model that learns joint probability distributions from data and generates samples according to it, by using MPS. Ref.~\cite{Liu:21} made use of autoregressive MPSs for building an unsupervised learning model that goes beyond proof-of-concept by showing performance comparable to standard traditional models. Finally, Ref.~\cite{Con:22} analyzes the contribution of polynomials of different degrees to the supervised learning performance of different architectures.

\section{Implications for biological control systems} \label{applications}

Scale-free biology requires a smooth transition from quantum-like to classical-like behavior.  Typical representations of metabolic, signal-transduction, and gene-regulatory pathways are entirely classical, even though many of their steps involve electron-transfer or other mechanisms that are acknowledged to require a quantum-theoretic description \cite{zwier:10, groenhof:13}.  As noted earlier, free-energy budget considerations suggest that both prokaryotic and eukaryotic cells employ quantum coherence as a computational resource \cite{fl:21}.  Emerging empirical evidence for longer-range entanglement in mammalian brains suggests that large-scale networks may also be using quantum coherence as a resource \cite{kerskens:22}.  Control flow models must, therefore, support the possibility of quantum computation in biological systems.  Hierarchical TNs that include unitary components, e.g.,  MERA-type models, provide this capability.  

In prokaryotes, the primary tasks of control flow are adapting metabolism to available resources via metabolite-driven gene regulation \cite{ledesma:21} and initiating DNA replication and cell division when conditions are favorable.  We can, therefore, expect shallow hierarchies of effectively classical control transitions in these organisms.  Eukaryotes, however, are characterized by both intracellular compartmentalization and morphological degrees of freedom at the whole-cell scale.  We have shown previously that the FEP will induce ``neuromorphic'' morphologies -- i.e. morphologies that segregate inputs from outputs and enable a fan-in/fan-out computational architecture -- in any systems with morphological degrees of freedom \cite{ffglm:22}.  Such systems can be expected to have deep control hierarchies at the cellular level, with hierarchical structure correlating with morphological structure in morphologically-complex cells such as neurons \cite{fgl:22}, and in multicellular assemblages at all scales.  As well as managing metabolism and replication, such systems must implement active exploration of the environment, communication with other systems, and -- crucially for cognition -- the writing and reading of stigmergic memories.  Thus we can expect such systems to implement QRFs for spacetime and for specific kinds of objects, e.g., conspecifics and suitable substrates for recording stigmergic memories.  Such QRFs rely on symmetries, and hence on redundancy of encoded (or encodable) information; they depend, in other words, on the availability of error-correcting codes \cite{fl:17, fl:22}.  The implementation of spacetime as a quantum error-correcting code by TNs has been extensively studied by physicists; see \cite{bain:20} for review and \cite{fgm:23} for a detailed analysis using the present formalism.  The use of spacetime as an error-correcting code by organisms -- e.g., the implementation of translational and rotational invariance of objects by dorsal visual processing in mammals \cite{flombaum:08, fields:11} -- is well-understood phenomenologically, but the details of neural implementation remain to be elucidated. 

Both the context-sensitivity of, and the occurence of context effects due to non-commutativity of QRFs in, control networks can be expected to increase with their complexity and hierarchical depth.  ``Bowtie'' networks with high fan-in/fan-out to/from multi-use proteins or second messengers such as Ca$^{2+}$ are increasingly recognized as ubiquitous in high eukaryotic cells \cite{niss:20}.  Such networks have the general for of the CCCD depicted in Diagram \eqref{cccd-2}.  Frequently, such networks evolve via compression of information (e.g. toward share second messengers, as in [Ca$^{2+}$]-based interactions \cite{carafoli:16,pol:09}) as an efficiency-increasing mechanism. Bowties introduce semantic ambiguities that must be resolved by context.  Each incoming signal has its own governing semantics, but the relevant context can depend on boundary conditions which can be exceedingly difficult (if not impossible) to predetermine (see e.g., \cite{friedlander:15, boniolo:21} for general discussions of the history and semantic depth of this problem).  As pointed out in \cite{fgl:21}, a context change $x \mapsto y$ is semantically problematic if for a fixed set $\{ o_i \} $ of observations, the conditional probability distributions $\rm{P}(o_i\vert x)$ and $\rm{P}(o_i \vert y)$ are well defined, but the joint distribution $\rm{P}(o_i \vert x \vee y)$ is not \cite{kochen:67}.  This occurs whenever the QRFs for $x$ and $y$ do not commute \cite[Th 7.1]{fg:21}.  As suggested by Diagram \eqref{cccd-2}, this context-switching problem affects deep learning using VAEs \cite{kingma:19}; see e.g., the application to antimicrobial peptides in \cite{dean:20}. In general, the structure of Diagram \eqref{cccd-2} can serve as a convenient benchmark for distinguishing signal transduction networks that incorporate co-deployable versus non-co-deployable QRFs \cite{fg:21}.

``Quantum'' context effects due to non-commutativity have, interestingly, been reported even at the scale of human language use.  The ``Snow Queen'' experiment \cite{cervantes:18} challenged subjects with distinct, mutually-inconsistent meanings of terms such as `kind', `evil', or `beautiful' in different contexts, and detected statistically-significant context effects using the CbD formalism \cite{Dzha2017a, Dzha2018}.  Such effects cannot be explained by linguistic ambiguity, misreading, etc.  Such language-driven contextuality is taken up in the setting of psycholinguistics and distributional semantics in \cite{wang:21}, which combines CbD and the sheaf theoretic \cite{abramsky:11, abramsky:17} methods to systematically study semantic ambiguity as creating meaning/sense discrepancies in statements like ``It was about time'', ``She had time on her hands to win the heat'', ``West led with a queen'', etc.

While the notion of ``languages'' has thus far been applied to cells, tissues, and even non-vertebrates in a mostly metaphorical way, we can speculate that linguistic approaches to understanding the interplay between context dependence and semantic ambiguity may be useful to biology in general.  Immune cells (e.g., T cells) are, for example, ``programmed'' or ``trained'' by their progenitor cells to respond to local cellular signals and ambient conditions in particular ways.  Unexpected context changes may induce dysfunctional (at the organism scale) responses, including chronic disorders \cite{focus:19}; these can be considered consequenes of discrepancies between the ``actual' semantics of incoming signals and the semantics expected by the immune systems' ``language.''  This suggestion of possible ``linguistic'' contextuality seems in consonance with the hypothesis of \cite{atlan:98} that the immune system is a cognitive (living) system implementing its exclusive system of language-grammar, which may be prone to analogous disorders of communication as those discussed in \cite{wang:21}. Similar context effects have been observed in microbiological systems \cite{basieva:11}; here discrepancies in experimentally derived classical probabilities arising from lactose-glucose interference signaling in {\em E. Coli} can only be explained in terms of non-classical probabilities. We note that the expression `quantum-like' \cite{khrennikov:15} is often used for such effects; however, their formal structure is exactly that given by quantum theory.

We expect that further research into quantum biology will unfold significant perspectives on human/mammalian physiology and cognitive capabilities along the lines suggested in the present article. For example, allostatic maintenance, as briefly alluded to in \S\ref{attractor}, can be seen as a process regulating a body's physiological conditions relative to costs and benefits while dynamically allocating resources for the purpose of overall adaptability of an organism within its internal environment.  Implementing the allostatic and anticipatory mechanisms are the visceromotor cortical regions generating autonomic, hormonal, and immunological predictions leading to interoceptive inference \cite{barrett:16,corcoran:20,hohwy:16,seth:16,barrett:15,barrett:17,hohwy:13,kleckner:17}.  This process of inference in humans and mammals putatively utilizes predictive coding for the processes of homeostasis-allostasis through a hierarchy of cellular to organ-level systems, in turn  connecting interoception to the processes of extercoception and proprioception \cite{seth:16,barrett:15,kleckner:17,seth:12,seth:13}. The basic principles follow from how allostasis provides protection against potential surprise by utilizing a framework somewhat beyond the error signaling necessary for homeostatic maintenance (it is essentially through minimizing the free energy of internal state trajectories towards combatting surprise, as discussed in \S\ref{attractor}). The net effect of the process is consonant with the Good Regulator theorem of \cite{conant:70}, showing how regulation of a given system requires an internal model of that system. A further perspective is to emphasize the predictive nature of an integrated, complex, allostatic-interoceptive cortical system capable of supporting a spectrum of psychological phenomena including memory and emotions \cite{kleckner:17} (cf. \cite{seth:16}). Accordingly, cognitive conditions such as depression and autism have been described as abnormalities of allostatic-interoceptive inference, so impairing predictive coding mechanisms due to aberrant assimilation and mistuning of prediction errors (putatively a connectivity issue), conceivably leading to a root cause of many known cognitive conditions \cite{seth:16,barrett:15,seth:12}.

We anticipate that this fully general, context sensitive model of control flow will be important for understanding morphogenesis, which is not simply a feed-forward emergent system, but rather a highly context-sensitive error-minimizing process \cite{levin:22}. Specifically, the collective intelligence of cells during embryonic development, organ regeneration, and metamorphosis can create and repair specific complex structures despite a wide range of perturbations \cite{pezzulo:16}. Changes in the genome, the number of cells, or the starting configuration can often be overcome: bisected embryos result in normal twins, amputated salamander limbs re-grow back to normal, and planarian fragments result in perfect little worms \cite{sanchez:08}. The competency of cellular collectives to reach the correct target morphology despite even drastic interventions requires an understanding of how they navigate, via context-sensitive control flow, problem spaces including anatomical morphospace \cite{levin:22a}, physiological, and transcriptional spaces \cite{fl:22, biswas:23}. Understanding the navigation policies used by unconventional collective intelligences can help not only understand creative problem-solving on rapid timescales (such as the ability to regulate genes to accommodate an entirely novel stressor \cite{emmons-bell:19} without evolutionary adaptation), but may also have implications for predicting and managing the goals and behavioral repertoires of synthetic beings \cite{clawson:22}.

\section{Conclusion} \label{next}

We have shown here how the problem of defining control flow arises in active inference systems, and provided three formal representations of the problem.  We have proved that control flow in such systems can always be represented by a tensor network, provided illustrative examples, and shown how the general formalism of topological quantum neural networks can be used to implement a general model of control flow.  These results provide a general formalism with which to characterize context dependence in active inference systems at any scale, from that of macromolecular pathways to that of multi-organism communities.  They suggest that the concept of communication by language is not just metaphorical when applied to biological systems in general, but rather an appropriate and productive description of interactional dynamics.

We view these results as a further step toward fully integrating the formal models, concepts, and languages of physics, biology, and cognitive science.  This integration is not reductive.  It rather allows us to classify systems using natural measures of organizational and computational complexity, and to understand how interactions between simpler systems can implement the more complex behavior of the larger systems that they compose.

\section*{Acknowledgements}
K.F. is supported by funding for the Wellcome Centre for Human Neuroimaging (Ref: 205103/Z/16/Z), a Canada-UK Artificial Intelligence Initiative (Ref: ES/T01279X/1) and the European Union’s Horizon 2020 Framework Programme for Research and Innovation under the Specific Grant Agreement No. 945539 (Human Brain Project SGA3).M.L. gratefully acknowledges funding from the Guy Foundation and the John Templeton Foundation, Grant 62230. A.M. wishes to acknowledge support by the Shanghai Municipality, through the grant No. KBH1512299, by Fudan University, through the grant No. JJH1512105, the Natural Science Foundation of China, through the grant No. 11875113, and by the Department of Physics at Fudan University, through the grant No. IDH1512092/001.

\section*{Conflict of interest}
The authors declare no competing, financial, or commercial interests in this research.

\end{document}